\renewcommand\footnotetextcopyrightpermission[1]{} %
\newtheorem*{rep@theorem}{\rep@title}
\newcommand{\newreptheorem}[2]{%
\newenvironment{rep#1}[1]{%
 \def\rep@title{#2 \ref{##1}}%
 \begin{rep@theorem}}%
 {\end{rep@theorem}}}
\theoremstyle{plain}
\newtheorem{theorem}{Theorem}[section]%
\newtheorem{lemma}[theorem]{Lemma}
\newtheorem{corollary}[theorem]{Corollary}
\newtheorem{claim}[theorem]{Claim}
\newtheorem{assumption}[theorem]{Assumption}
\newtheorem{definition}[theorem]{Definition}
\theoremstyle{definition}
\newcommand{\vol}{\mathrm{vol}}
\newcommand{\poly}{\mathrm{poly}}
\newcommand{\R}{\mathbb{R}}
\newcommand{\1}{\mathbf{1}}
\renewcommand{\qed}{\nobreak \ifvmode \relax \else
	\ifdim\lastskip<1.5em \hskip-\lastskip
	\hskip1.5em plus0em minus0.5em \fi \nobreak
	\vrule height0.75em width0.5em depth0.25em\fi}
\newcommand{\junk}[1]{{}}
\providecommand{\abs}[1]{\lvert#1\rvert}
\providecommand{\lrabs}[1]{\left\lvert#1\right\rvert}
\providecommand{\norm}[1]{\lVert#1\rVert}
\providecommand{\abs}[1]{\left|#1\right|}
\providecommand{\norm}[1]{\lVert#1\rVert}
\providecommand{\mat}[1]{\ensuremath\mathbf#1}
\begin{document}
 \fancyhead{}

	\title{Local Algorithms for Estimating Effective Resistance}

\author{Pan Peng}
\affiliation{%
	\institution{Department of Computer Science\\University of Sheffield}
		\country{}
}
\email{p.peng@sheffield.ac.uk}
\orcid{0000-0003-2700-5699}

\author{Daniel Lopatta}
\affiliation{%
	\institution{Department of Computer Science\\University of Sheffield}
		\country{}
}
\email{dlopatta1@sheffield.ac.uk}

\author{Yuichi Yoshida}
\orcid{}
\affiliation{%
	\institution{Principles of Informatics Research Division\\National Institute of Informatics}
		\country{}
}
\email{yyoshida@nii.ac.jp}
	
\author{Gramoz Goranci}
\orcid{}
\affiliation{%
	\institution{School of Computing Science\\University of Glasgow}
	\country{}
}
\email{gramoz.goranci@glasgow.ac.uk}

	\begin{abstract}	
Effective resistance is an important metric that measures the similarity of two vertices in a graph. It has found applications in  graph clustering, recommendation systems and network reliability, among others. In spite of the importance of the effective resistances, we still lack efficient algorithms to exactly compute or approximate them on massive graphs.

In this work, we design several \emph{local algorithms} for estimating effective resistances, which are algorithms that only read a small portion of the input while still having provable performance guarantees. To illustrate, our main  algorithm approximates the effective resistance between any vertex pair $s,t$ with an arbitrarily small additive error $\varepsilon$ in time $O(\poly(\log n/\varepsilon))$, whenever the underlying graph has bounded mixing time. We perform an extensive empirical study on several benchmark datasets, validating the performance of our algorithms.
	\end{abstract}

	\begin{CCSXML}
		<ccs2012>
		<concept>
		<concept_id>10002950.10003624.10003633.10010917</concept_id>
		<concept_desc>Mathematics of computing~Graph algorithms</concept_desc>
		<concept_significance>500</concept_significance>
		</concept>
		<concept>
		<concept_id>10003752.10003809.10010055.10010057</concept_id>
		<concept_desc>Theory of computation~Sketching and sampling</concept_desc>
		<concept_significance>500</concept_significance>
		</concept>
		</ccs2012>
	\end{CCSXML}

	\ccsdesc[500]{Mathematics of computing~Graph algorithms}
	\ccsdesc[500]{Theory of computation~Sketching and sampling}

	\keywords{Graph algorithms, Random walks, Effective resistances}

\maketitle

\section{Introduction}

Metrics that capture the similarity between vertices in a graph have played a pivotal role in the quest for understanding the structure of large-scale networks. Typical examples include personalized PageRank (PPR)~\cite{page1999pagerank}, Katz similarity~\cite{katz1953new} and SimRank~\cite{jeh2002simrank}, each of which can be thought of as a random walk-based measure on graphs. These metrics have found applications in recommender systems~\cite{kim2011personalized}, link prediction~\cite{lu2011link,song2009scalable}, etc. %

A remarkably important random walk-based metric for measuring vertex similarity is the \emph{effective resistance}. Given a graph $G$ treated as a resistor network, the effective resistance $R_G(s,t)$ between two vertices $s,t$ in $G$ is the energy dissipation in the network when routing one unit of current from  $s$ to $t$. It is well known that the effective resistance is inherently related to the behaviour of random walks on graphs\footnote{We only consider simple random walks in the paper: suppose we are at vertex $v$, we jump to a neighbor of $v$ with probability $1/\deg(v)$, where $\deg(v)$ is the degree of vertex $v$.}. Concretely, the effective resistance between $s$ and $t$ is proportional to the \emph{commute time} $\kappa(s,t)$, defined as the expected number of steps a random walk starting at $s$ visits vertex $t$ and then goes back to $s$~\cite{chandra1996electrical}. Using this interpretation, we can deduce that the smaller $R_G(s,t)$ is, the more similar two vertices $s,t$ are.

Indeed, effective resistance has proven ubiquitous in numerous applications including graph clustering~\cite{alev_et_al:LIPIcs:2018:8369,fortunato2010community}, recommender systems~\cite{kunegis2007collaborative}, measuring robustness of networks~\cite{ellens2011effective}, spectral sparsification~\cite{spielman2011graph}, graph convolutional networks~\cite{AHMAD2021389}, location-based advertising~\cite{haoran2019analyzing}, among others. Moreover, in the theoretical computer science community, the use of effective resistance has led to a breakthrough line of work %
for provably speeding up the running time of many flow-based problems in combinatorial optimization~\cite{Anari2015,Christiano2011,Madry2014}.

Despite of the importance of effective resistance, we still lack efficient methods to compute or approximate them on massive graphs. For any two vertices $s,t$ and approximation parameter $\varepsilon>0$, one can $(1+\varepsilon)$-approximate $R_G(s,t)$  in\footnote{Throughout the paper, we use $\tilde{O}$ to hide polylogarithmic factors, i.e., $\tilde{O}(f(n))=O(f(n)\cdot \poly\log f(n))$.}  %
$\tilde{O}(m \log (1/\varepsilon))$ time~\cite{CohenKMPPRX14}, where $m$ denotes the number of edges in a graph. There exists an algorithm that $(1+\varepsilon)$-approximates \emph{all-pairs} effective resistances in $\tilde{O}(n^2/\varepsilon)$ time~\cite{JS18:sketch}. These results, though theoretically competitive, require access to the entire input graph. Given the rapid growth of modern networks, such polynomial time algorithms (even those running in near linear time in the number of vertices and edges) are prohibitively costly. %
This motivates the following question:
\begin{center}
	\emph{Can we obtain a competitive estimation to $R_G(s,t)$ while exploring only a small portion of the graph?}
\end{center}

We address this question by exploiting the paradigm of \emph{local} or \emph{sub-linear} algorithms. This computational model is particularly desirable in applications where one requires the effective resistances amongst only a few number of vertex pairs. %
Despite that the effective resistance is a key tool in large-scale graph analytics, designing local algorithms for estimating it is a largely unexplored topic.

In this paper, we provide several local algorithms for estimating pairwise effective resistances with provable performance guarantees. For any specified vertex pair $s,t$, our algorithms output an estimate of $R_G(s,t)$ with an arbitrarily small constant additive error, while  exploring a small portion of the graph.
To formally state our results, we utilize the well-known \emph{adjacency list} model~\cite{ron2019sublinear}, which assumes query access to the input graph $G$ and supports the following types of queries in constant time:
\begin{itemize}
	\item \emph{degree query}: for any specified vertex $v$, the algorithm can get the degree $\deg(v)$ of $v$;
	\item \emph{neighbor query}: for any specified vertex $v$ and index $i\leq \deg(v)$, the algorithm can get the $i$-th neighbor of $v$;
	\item \emph{uniform sampling}: the algorithm can sample a vertex $v$ of $G$ uniformly at random. %
\end{itemize}

Our main objective is to find a good estimate of the pairwise effective resistance $R_G(s,t)$ for any specified vertex pair $s,t$ by making as few queries as possible to the graph while achieving fast running time.

\paragraph{Our contributions.} We give a systemic study of local algorithms for estimating $s$-$t$ effective resistances for general graphs.
\begin{itemize}
	\item Theoretically, we provide three types of local algorithms for estimating effective resistances. All of them are based on random walks, but vary from their connections to effective resistances:
	(i) the first type is based on approximating the pseudo inverse of the Laplacian matrix,
	(ii) the second type is based on commute times, (iii) the third type is based on the number of spanning trees.
	\item We empirically demonstrate the competitiveness of our algorithms on popular benchmarks for graph data. In particular, for certain real-world networks, we will see that our algorithms run $10^5$ to $10^6$ faster than existing polynomial-time methods and estimate effective resistance to within a multiplicative error of $0.1$.
\end{itemize}

To illustrate, our main local algorithm approximates $R_G(s,t)$ with an arbitrarily small additive error $\varepsilon$ in time $O(\poly(\log n/\varepsilon))$, whenever the underlying graph has bounded mixing time, which is justified in real-world networks. Previously, the only work on this problem was by Andoni et al.~\cite{andoni2018solving}, and it achieves $(1+\varepsilon)$-approximation to $R_G(s,t)$ in $O(\frac{1}{\varepsilon^2}\poly\log \frac{1}{\varepsilon})$ time for $d$-regular expander graphs. Indeed, one of our algorithms for general graphs is based on   \cite{andoni2018solving}. %

Using the fact that the length of shortest paths and effective resistances are exactly the same on tree graphs, we can observe that graphs with large mixing time do not admit efficient local algorithms. Concretely, let us consider a path graph on $n$ vertices. It is known that the path graph has large mixing time, and there is no local algorithm that makes a sub-linear number of queries and approximates the length of shortest paths within a constant multiplicative factor or additive error, thus giving us the same impossibility result for effective resistances. This suggests that our bounded mixing time assumption is necessary to design local algorithms with sublinear number of queries and running time.

\section{Related work}%
In this section, we discuss some related work. %

Hayashi~et~al.~\cite{hayashi2016efficient} gave an algorithm for approximating the effective resistances of vertex pairs that are endpoints of edges.  %
Their algorithm is based on sampling spanning trees uniformly at random, and it %
$(1+\varepsilon)$-approximates $R_G(s,t)$ for \emph{every} $(s,t)\in E$ in expected running time  $\lceil\log(2m/\delta)/2\varepsilon^2\rceil\cdot\sum_{u\in V}\pi_G(u)\kappa_G(u,r)$, where $\pi_G(u)$ denotes the stationary probability at a vertex $u\in V$ of a random walk on $G$, $\kappa_G(u,v)$ denotes the commute time between two vertices $u,v\in V$ and $r\in V$ is some vertex. %

There also exist several local algorithms for other random walk based quantities, such as the stationary distribution, PageRank, Personalized PageRank and transition probabilities.

{\it The stationary distribution.} Lee et al.~\cite{lee2013computing} and Bressan~et~al.~\cite{bressan2019approximating} studied the question of computing the stationary distribution $\pi$ of a Markov Chain locally. These algorithms take as input any state $v$, and answer if the stationary probability of $v$ exceeds some $\Delta\in (0,1)$ and/or output an estimate of $\pi(v)$. They only make use of a local neighborhood of $v$ on the graph induced by the Markov chain and run in sublinear time for some families of Markov Chains.

{\it PageRank.}
Borgs~et~al.~presented a method for identifying all vertices whose PageRank is larger than some threshold~\cite{borgs2012sublinear}. Specifically, for a threshold value $\Delta\geq 1$ and a constant $c>3$, with high probability, their algorithm returns a set $S\subseteq V$ such that $S$ contains all vertices with PageRank at least $\Delta$ and no vertex with PageRank at least $\Delta/c$. The algorithm runs in $\tilde{O}(\frac{n}{\Delta})$ time.

Bressan~et~al.~developed a sub-linear time algorithm that employs local graph exploration~\cite{bressan2018sublinear}. Their algorithm $(1+\varepsilon)$-approximates the PageRank of a vertex on a directed graph. For constant $\varepsilon>0$, the algorithm runs in $\tilde{O}(\min(m^{3/4}\Delta^{1/4}d^{-3/4},m^{6/7}d^{-5/7}))$, where $\Delta$ and $d$ are respectively the maximum and average outdegree.

{\it Personalized PageRank (PPR).} The PPR $\pi_s(t)$ of a start vertex $s$ and target vertex $t$ measures the frequency of visiting $t$ via short random-walks from $s$. For a given threshold $\delta$ such that $\pi_s(t)>\delta$, Lofgren~et~al.~solved this with small relative error and an expected running time of $O(\sqrt{d/\delta})$~\cite{lofgren2014fast}, where $d$ is the average in-degree of the graph. Their algorithm is based on a bi-directional search technique and an improved implementation was presented in~\cite{lofgren2016personalized}.

{\it Transition probabilities.} Another problem related to effective resistance is estimating transition probabilities in a Markov chain. Specifically, given transition matrix $P$, initial source distribution $\sigma$, target state $t$, and a fixed length $\ell$, the goal is to estimate the probability $p$ that an $\ell$-step random walk starting from distribution $\sigma$ ends at $t$. %
Banerjee and Lofgren developed an algorithm that can estimate such a probability with respect to a minimum threshold $\delta$ such that $p>\delta$ by employing a bi-directional approach~\cite{banerjee2015fast}. Specifically, their algorithms returns an estimator $\hat{p}$ of $p$ such that with high probability $|\hat{p}-p|<\max\{\varepsilon p, \delta\}$ for any $\varepsilon>0$. %

\section{Preliminaries}
Let $G=(V,E)$ be an undirected graph. For any $v\in V$, we let $\deg(v)$ denote the degree of $v$. The \emph{volume} of a set $S$ of vertices, denoted $\vol(S)$, is the sum of their degrees. Furthermore, for a set $S\subseteq V$, the \emph{conductance} of $S$, denoted $\phi_G(S)$, is the number of edges with one endpoint in $S$ and the other in $V\setminus S$ divided by $\vol(S)$. The \emph{conductance} of $G$, denoted $\phi(G)$, is defined to be $\min_{S\subseteq V,0<\vol(S)\leq\frac{\vol(V)}{2}}\phi_G(S)$. A graph $G$ is called an \emph{expander} if $\phi(G)\geq \phi$ for some universal constant $\phi\in(0,1)$.

Let $\mat{A}$ denote its adjacency matrix and let $\mat{D}$ denote the degree diagonal matrix. Let $\mat{L}=\mat{D}-\mat{A}$ denote the \emph{Laplacian} matrix of $G$. Let $\mat{L}^\dagger$ denote the Moore-Penrose pseudo-inverse of the Laplacian of $G$. Let $\1_u \in \R^V$ denote the (row) indicator vector of vertex $u$ such that $\1_u(v)=1$ if $v=u$ and $0$ otherwise. Let $\chi_{s,t}=\1_s-\1_t$.
\begin{definition}
	Given any two vertices $u,v\in V$, the \emph{$s$-$t$ effective resistance} is defined as $$R_G(s,t):=\chi_{s,t}\mat{L}^\dagger\chi_{s,t}^\top=\mat{L}^\dagger_{s,s}-2\mat{L} ^\dagger_{s,t}+\mat{L}^\dagger_{t,t}.$$
\end{definition}

\paragraph{Random walks.} Given a graph $G$, we consider the simple random walk on $G$: suppose we are currently at $v$, then we jump to a neighbor $u$ with probability $\frac{1}{\deg(v)}$. We use $\mat{P}:=\mat{D}^{-1}\mat{A}$ to denote the random walk transition matrix. Let $\lambda=\max\{|\lambda_2|, |\lambda_n|\}$, where $\lambda_i$ is the $i$-th largest eigenvalue of the matrix $\mat{P}$.
\begin{definition}
	The commute time $\kappa(s,t)$ between vertices $s,t$ is the expected number of steps in a random walk that starts at vertex $s$ visits vertex $t$ and then comes back to $s$.
\end{definition}

Random walks on graphs are a type of Markov Chain. A Markov chain is said to be \emph{positive recurrent} if, starting in any state $i$, the expected time until the process returns to state $i$ is finite. A Markov chain is said to be \emph{aperiodic} if for any state $i$ there are no restrictions on when it is possible for the process to enter state $i$.

\begin{definition}
	A Markov chain is said to be ergodic if it is aperiodic and positive recurrent.
\end{definition}

Informally, the \emph{mixing time} of the graph $G$ refers to the number of steps needed before a random walk on $G$ converges to its stationary distribution. We refer to \cite{sinclair1992improved} for a formal definition. It is known that the spectral gap $1-\lambda$ is intimately related to the mixing time of $G$. That is, the larger $1-\lambda$ is, the smaller mixing time is, and vice versa.

\section{The Local Algorithms}\label{sec:algorithms}
\subsection{Algorithms based on approximating Laplacian inverse}
In this section, we provide local algorithms for effective resistances by approximating the Laplacian pseudo-inverse $\mat{L}^\dagger$ of the graph.

\paragraph{High-level idea} Our algorithm works for general graphs and is based on the aforementioned sublinear-time algorithm for $d$-regular graphs~\cite{andoni2018solving}. The basic idea is as follows. Recall that by definition of effective resistance, $R_G(s,t)=\chi_{s,t}\mat{L}^{\dagger}\chi_{s,t}^\top$ and $\mat{P}=\mat{D}^{-1}\mat{A}$ is the random walk transition matrix. Using the Neumann series of the matrix $\mat{L}^{\dagger}$ {(see Lemma \ref{lemma:neumann})}, we can write  %
\begin{align*}
	R_G(s,t)&=\chi_{s,t} \sum_{i=0}^\infty \mat{P}^{i}\mat{D}^{-1} \chi_{s,t}^\top\\
	&=\chi_{s,t} \sum_{i=0}^{\ell-1} \mat{P}^{i}\mat{D}^{-1} \chi_{s,t}^\top + \chi_{s,t} \sum_{i\geq \ell}^\infty \mat{P}^{i}\mat{D}^{-1} \chi_{s,t}^\top.
\end{align*}
for any $\ell>0$. For graphs with large spectral gap (i.e., expander graphs or graphs with low random walk mixing time), we can show that for any additive error $\varepsilon$, we can choose $\ell$ appropriately such that the second term is at most $\varepsilon/2$, and the first term can be approximated within additive error $\varepsilon/2$. For the latter, we use a simple Monte Carlo approach (i.e., to use the empirical distribution of the endpoints of a small number of random walks) to approximate the quantity $\1_s \mat{P}^i\1_t^\top$, the (transition) probability that a length-$i$ random walk starting from $s$ ends at $t$, for any $i\geq 1$.

Now we introduce one assumption, building upon which we present and analyze two local algorithms.
\begin{assumption}\label{assume:expander}
	Let $G$ be a connected graph with minimum vertex degree at least $1$. Further, assume that the Markov chain corresponding to the random walk on $G$ is ergodic.
\end{assumption}

\paragraph{\bf The first algorithm: \textsc{EstEff-TranProb}} We first present an algorithm that uses the above idea. Recall that $\lambda=\max\{|\lambda_2|, |\lambda_n|\}$, where $\lambda_i$ is the $i$-th largest eigenvalue of the matrix $\mat{P}$.

\begin{theorem}\label{thm:esteff-expander}
	Under Assumption~\ref{assume:expander}, there is an algorithm \Call{EstEff-TranProb}{$G,\varepsilon, s,t$} (see Algorithm~\ref{alg:tranprob}) that outputs an estimate $\hat{\delta}_{s,t}$ such that with probability at least $9/10$, it holds that
	\[
	\abs{R_G(s,t)-\hat{\delta}_{s,t}} \leq \varepsilon.
	\]
	
	The running time and query complexity of the algorithm are $O(\ell^4(\log \ell) /\varepsilon^2)$ for $\ell=\frac{\log (4/(\varepsilon-\varepsilon\lambda))}{\log (1/\lambda)}$.
\end{theorem}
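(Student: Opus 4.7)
My plan is to follow the strategy sketched just above the theorem: decompose $R_G(s,t)$ via the Neumann series of Lemma~\ref{lemma:neumann}, bound the $\ell$-tail geometrically using the spectral gap $1-\lambda$, and estimate the truncated head by a Monte Carlo scheme over short random walks. The overall error target is split as $\varepsilon/2+\varepsilon/2$: tail plus head.

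For the tail, I would diagonalize $\mat{P}$ through its symmetric conjugate $\mat{N}:=\mat{D}^{-1/2}\mat{A}\mat{D}^{-1/2}$, whose orthonormal eigenvectors $v_1,\dots,v_n$ have real eigenvalues $1=\lambda_1>|\lambda_2|\ge\cdots$ under Assumption~\ref{assume:expander}, so $\lambda=\max(|\lambda_2|,|\lambda_n|)<1$. Since $\mat{P}^i\mat{D}^{-1}=\mat{D}^{-1/2}\mat{N}^i\mat{D}^{-1/2}$, setting $y:=\mat{D}^{-1/2}\chi_{s,t}^\top$ gives $\chi_{s,t}\mat{P}^i\mat{D}^{-1}\chi_{s,t}^\top=y^\top\mat{N}^i y$. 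The crucial observation is that $v_1\propto\mat{D}^{1/2}\1$, so $\langle y,v_1\rangle=\chi_{s,t}\1=0$; expanding $y$ in the eigenbasis then yields
\begin{align*}
\Bigl|\chi_{s,t}\sum_{i=\ell}^{\infty}\mat{P}^i\mat{D}^{-1}\chi_{s,t}^\top\Bigr|
&\leq \sum_{j\geq 2}\frac{|\lambda_j|^{\ell}}{1-|\lambda_j|}\langle y,v_j\rangle^2 \\
&\leq \frac{\lambda^\ell}{1-\lambda}\|y\|^2 \;\leq\; \frac{2\lambda^\ell}{1-\lambda},
\end{align*}
using $\|y\|^2=\tfrac{1}{\deg(s)}+\tfrac{1}{\deg(t)}\leq 2$ (minimum-degree part of the assumption). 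Plugging in $\ell=\log(4/(\varepsilon-\varepsilon\lambda))/\log(1/\lambda)$ bounds the tail by $\varepsilon/2$.

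For the truncated head, expand $\chi_{s,t}\mat{P}^i\mat{D}^{-1}\chi_{s,t}^\top$ as the signed combination of the four transition probabilities $(\mat{P}^i)_{u,v}$ for $u,v\in\{s,t\}$, each scaled by $1/\deg(\cdot)\leq 1$. The algorithm would estimate each $(\mat{P}^i)_{u,v}$ by the empirical fraction of endpoints among $N=\Theta(\ell^2\log\ell/\varepsilon^2)$ independent simple random walks of length $i$ started at $u$. By Hoeffding's inequality plus a union bound over the $4\ell$ quantities $\{(\mat{P}^i)_{u,v}:0\leq i<\ell,\,u,v\in\{s,t\}\}$, every estimate is $\varepsilon/(16\ell)$-accurate with probability at least $9/10$; summing the $4\ell$ scaled errors gives a head error of at most $\varepsilon/2$, which combined with the tail bound gives the claimed additive error $\varepsilon$. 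The running time and query count are dominated by simulating the walks, costing $\sum_{i=0}^{\ell-1}N\cdot i=O(\ell^4\log\ell/\varepsilon^2)$.

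The main obstacle is the tail argument, which crucially uses both the spectral gap of $\mat{P}$ and the identity $\chi_{s,t}\1=0$ to suppress the stationary ($\lambda_1=1$) direction of $\mat{N}$; without that orthogonality the infinite series would not decay and no choice of $\ell$ would work. Once the spectral decomposition of $\mat{N}$ is in place, the rest is a routine Hoeffding/union-bound analysis and a telescoping error budget across the $\ell$ walk-lengths.
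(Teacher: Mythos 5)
Your proposal is correct and follows essentially the same route as the paper's proof: the Neumann-series decomposition through the symmetric conjugate $\mat{D}^{-1/2}\mat{A}\mat{D}^{-1/2}$, the orthogonality $\chi_{s,t}\1=0$ to kill the top eigendirection, the bound $\|\mat{D}^{-1/2}\chi_{s,t}^\top\|^2\leq 2$ giving a tail of $2\lambda^\ell/(1-\lambda)\leq\varepsilon/2$, and a Hoeffding-plus-union-bound estimate of the four degree-scaled transition probabilities for each $i<\ell$. The only differences are immaterial constants (e.g., $\varepsilon/(16\ell)$ versus the paper's $\varepsilon/(8\ell)$ per-term accuracy), and the running-time accounting matches.
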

The above algorithm is very efficient, if the graph has small $\lambda$, or has low mixing time, a property that is satisfied by many real networks.
Now we present the algorithm \textsc{EstEff-TranProb}.
\begin{algorithm}[h!]
	\caption{\textsc{EstEff-TranProb}$(G,\varepsilon,s,t)$}\label{alg:tranprob}
	\DontPrintSemicolon
	$\ell=\frac{\log (4/(\varepsilon-\varepsilon\lambda))}{\log (1/\lambda)}$\;
	$r\gets 40\ell^2 (\log (80\ell))/\varepsilon^2$\;
	\For{$i:=0,1,\ldots,\ell-1$}{
		\label{alg:transition1}	Perform $r$ independent random walks of length $i$ starting at $s$, and let $X_{i,s}$ (resp., $X_{i,t}$) be the number of walks that end at $s$ (resp., $t$). \;
		\label{alg:transition2}	Perform $r$ independent random walks of length $i$ starting at $t$, and let $Y_{i,s}$ (resp., $Y_{i,t}$) be the number of walks that end at $s$ (resp., $t$) \;
		Set $\hat{\delta}_{s,t}^{(i)}= \frac{X_{i,s}}{r \deg(s)} -\frac{X_{i,t}}{r \deg(t)}-\frac{Y_{i,s}}{r \deg(s)} +\frac{Y_{i,t}}{r \deg(t)}   $\;
	}
	\Return{$\hat{\delta}_{s,t}= \sum_{i=0}^{\ell-1}\hat{\delta}_{s,t}^{(i)}$}
\end{algorithm}

\textbf{Proof of Theorem~\ref{thm:esteff-expander}} We first note that the running time and query complexity of the algorithm are $O(r\ell^2)=O(\ell^4(\log \ell) /\varepsilon^2)$.

In the following, we prove the correctness of the algorithm.
We first present a basic property of effective resistance. Let $\mat{Q}=\mat{D}^{-1/2}\mat{A}\mat{D}^{-1/2}$.
Recall that $\mat{L}=\mat{D}-\mat{A}=\mat{D}^{1/2}(I-\mat{Q})\mat{D}^{1/2}$ and that $R_G(s,t)=\chi_{s,t}\mat{L}^{\dagger}\chi_{s,t}^\top$. Note that $\mat{Q}=\mat{D}^{-1/2}\mat{A}\mat{D}^{-1/2}=\mat{D}^{1/2}\mat{P}\mat{D}^{-1/2}$ is symmetric and is similar to $\mat{P}$ (as the diagonal matrix $\mat{D}$ is invertible, which in turn follows from Assumption~\ref{assume:expander}). %
We let $\lambda_1\geq \lambda_2\geq \lambda_3\ge \cdots\ge \lambda_{n}$ be the eigenvalues of $\mat{Q}$ (and also $\mat{P}$ by the similarity of $\mat{P}$ and $\mat{Q}$), with corresponding (row) orthonormal eigenvectors $w_1,w_2,w_3,\ldots, w_n$, i.e,. $w_j \mat{Q} = \lambda_j w_j$. It is known that $\lambda_1=1$ and $w_1=\frac{\1_V D^{1/2}}{\sqrt{2m}}$.

\begin{lemma}
	\label{lemma:neumann}
	It holds that
	\begin{eqnarray*}
		R_G(s,t) = \chi_{s,t} \sum_{i=0}^\infty \mat{P}^{i}\mat{D}^{-1} \chi_{s,t}^\top.
	\end{eqnarray*}
\end{lemma}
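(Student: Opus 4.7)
The plan is to write $\mat{L}^\dagger$ via the spectral decomposition of the symmetrized walk matrix $\mat{Q}$, observe that the trivial eigencomponent along $w_1$ annihilates $\chi_{s,t}$, and then expand the resulting non-singular inverse as a geometric (Neumann) series before converting back to $\mat{P}$.

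Concretely, I would first record that $\mat{L}=\mat{D}^{1/2}(\mat{I}-\mat{Q})\mat{D}^{1/2}$ and that $\mat{D}^{1/2}$ is invertible under Assumption~\ref{assume:expander}, so
\begin{equation*}
\mat{L}^\dagger=\mat{D}^{-1/2}(\mat{I}-\mat{Q})^\dagger\mat{D}^{-1/2}.
\end{equation*}
Writing $\mat{Q}=\sum_{j=1}^n\lambda_j w_j^\top w_j$ with $\lambda_1=1$ and $w_1=\mathbf{1}_V\mat{D}^{1/2}/\sqrt{2m}$, the pseudoinverse is $(\mat{I}-\mat{Q})^\dagger=\sum_{j=2}^n(1-\lambda_j)^{-1}w_j^\top w_j$, since ergodicity guarantees $|\lambda_j|<1$ for $j\ge 2$ and the kernel of $\mat{I}-\mat{Q}$ is spanned by $w_1$.

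Next I would compute $w_1\mat{D}^{-1/2}\chi_{s,t}^\top=\mathbf{1}_V\chi_{s,t}^\top/\sqrt{2m}=0$, which is the key simplification: the otherwise problematic $\lambda_1=1$ mode contributes nothing to $R_G(s,t)$. Therefore
\begin{equation*}
R_G(s,t)=\sum_{j=2}^n\frac{1}{1-\lambda_j}\bigl(\chi_{s,t}\mat{D}^{-1/2}w_j^\top\bigr)\bigl(w_j\mat{D}^{-1/2}\chi_{s,t}^\top\bigr).
\end{equation*}
Now for each $j\ge 2$ I expand $(1-\lambda_j)^{-1}=\sum_{i=0}^\infty\lambda_j^i$, absolutely convergent since $|\lambda_j|<1$. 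Because the outer sum over $j$ has only finitely many terms, Fubini lets me swap the two sums, giving
\begin{equation*}
R_G(s,t)=\sum_{i=0}^\infty\chi_{s,t}\mat{D}^{-1/2}\Bigl(\sum_{j=2}^n\lambda_j^i w_j^\top w_j\Bigr)\mat{D}^{-1/2}\chi_{s,t}^\top.
\end{equation*}
Adding back the $j=1$ term $w_1^\top w_1$ in the inner bracket costs nothing, by the vanishing identity above, so the inner bracket becomes $\mat{Q}^i$.

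Finally I would use the similarity $\mat{Q}=\mat{D}^{1/2}\mat{P}\mat{D}^{-1/2}$, which iterated gives $\mat{Q}^i=\mat{D}^{1/2}\mat{P}^i\mat{D}^{-1/2}$ and hence $\mat{D}^{-1/2}\mat{Q}^i\mat{D}^{-1/2}=\mat{P}^i\mat{D}^{-1}$, yielding the claimed identity $R_G(s,t)=\chi_{s,t}\sum_{i=0}^\infty\mat{P}^i\mat{D}^{-1}\chi_{s,t}^\top$. The only delicate step is the swap of summations, since the naive Neumann series $\sum_i\mat{Q}^i$ diverges along $w_1$; this is resolved by first projecting out the kernel (via the pseudoinverse) and only then expanding the geometric series, so termwise exchange is valid on each of the finitely many non-trivial eigendirections.
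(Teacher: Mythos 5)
Your proof is correct and follows essentially the same route as the paper's: both factor $\mat{L}^\dagger=\mat{D}^{-1/2}(\mat{I}-\mat{Q})^\dagger\mat{D}^{-1/2}$, use the spectral decomposition of $\mat{Q}$ with the key observation that $\chi_{s,t}\mat{D}^{-1/2}$ is orthogonal to $w_1$, expand $(1-\lambda_j)^{-1}$ as a geometric series, and convert back via $\mat{D}^{-1/2}\mat{Q}^i\mat{D}^{-1/2}=\mat{P}^i\mat{D}^{-1}$. Your explicit justification of the sum interchange and of why the $j=1$ mode can be harmlessly added back is a slightly more careful rendering of the same argument.
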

\begin{proof}
	By the spectral decomposition of $\mat{Q}$, we have that for any integer $i\geq 0$, $\mat{Q}^i=\sum_{j=1}^n\lambda_j^i w_j^\top w_j=w_1^\top w_1+\sum_{j=2}^n\lambda_j^i w_j^\top w_j$.
	
	Since $\mat{L}=\mat{D}-\mat{A}=\mat{D}^{1/2}(I-\mat{Q})\mat{D}^{1/2}$, we have that
	\begin{align*}
		\mat{L}^\dagger
		&=\mat{D}^{-1/2}{(I-\mat{Q})}^\dagger \mat{D}^{-1/2} \\
		&= \mat{D}^{-1/2}\sum_{j=2}^n\frac{1}{1-\lambda_j}w_j^\top w_j \mat{D}^{-1/2}=\mat{D}^{-1/2}\sum_{j=2}^n\sum_{i=0}^{\infty} \lambda_j^i w_j^\top w_j \mat{D}^{-1/2}\\
		&= \mat{D}^{-1/2} \sum_{i=0}^{\infty}  \sum_{j=2}^n\lambda_j^i w_j^\top w_j \mat{D}^{-1/2} = \mat{D}^{-1/2} \sum_{i=0}^{\infty}  \left( \mat{Q}^i - w_1^\top w_1 \right) \mat{D}^{-1/2}.
	\end{align*}
	
	Now we write $\chi_{s,t}\mat{D}^{-1/2} = \sum_{j=1}^n \alpha_j w_j$. We note that $\alpha_1=\chi_{s,t}\mat{D}^{-1/2} \cdot w_1^\top = \chi_{s,t} \mat{D}^{-1/2} \mat{D}^{1/2}\1_V^\top/\sqrt{2m} = 0$. Thus $\chi_{s,t}\mat{D}^{-1/2} = \sum_{j=2}^n \alpha_j w_j$. Then $\chi_{s,t}\mat{D}^{-1/2}\mat{Q}^i \mat{D}^{-1/2} \chi_{s,t}^\top= \sum_{j=1}^n \alpha_j^{2} \lambda_j^i$.
	
	Note that $\mat{P}^i \mat{D}^{-1}={(\mat{D}^{-1}\mat{A})}^i \mat{D}^{-1} =\mat{D}^{-1/2}{(\mat{D}^{-1/2}\mat{A}\mat{D}^{-1/2})}^i \mat{D}^{-1/2}=\mat{D}^{-1/2}\mat{Q}^i \mat{D}^{-1/2}$.
	Thus
	\begin{align*}
		R_G(s,t) &=\chi_{s,t}\mat{L}^\dagger \chi_{s,t}^\top=\chi_{s,t} \mat{D}^{-1/2} \sum_{i=0}^{\infty}  (\mat{Q}^i-w_1^\top w_1) \mat{D}^{-1/2} \chi_{s,t}^\top \\
		&= \sum_{i=0}^{\infty}  \chi_{s,t} \mat{D}^{-1/2} \mat{Q}^i \mat{D}^{-1/2} \chi_{s,t}^\top
		=\sum_{i=0}^\infty \chi_{s,t} \mat{P}^i \mat{D}^{-1} \chi_{s,t}^\top. \qedhere
	\end{align*}
\end{proof}

By Assumption~\ref{assume:expander}, the Markov chain corresponding to the random walk on $G$ is ergodic. Then $\lambda_2<1$. Recall that $\lambda=\max\{|\lambda_2|, |\lambda_n|\}$ { and that $\chi_{s,t}\mat{D}^{-1/2} = \sum_{j=1}^n \alpha_j w_j$, where $\alpha_1 =0$}. Furthermore, by the assumption that each vertex has degree at least $1$, we have $\sum_{j=2}^n \alpha_j^{2} =\norm{\chi_{s,t}\mat{D}^{-1/2}}_2^2\leq \norm{\chi_{s,t}}_2^2=2$. Now we prove the following two claims.

\begin{claim}
	It holds that $\abs{R_G(s,t) - \sum_{i=0}^{\ell-1} \chi_{s,t} \mat{P}^i \cdot \mat{D}^{-1}\chi_{s,t}^\top}\leq \frac{\varepsilon }{2}$.
\end{claim}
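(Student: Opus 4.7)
The plan is to identify the claimed quantity as the tail of the Neumann series from Lemma~\ref{lemma:neumann} and bound that tail geometrically using the spectral decomposition already worked out in the proof of the lemma. Concretely, Lemma~\ref{lemma:neumann} gives
$R_G(s,t)=\sum_{i=0}^\infty \chi_{s,t}\mat{P}^{i}\mat{D}^{-1}\chi_{s,t}^\top$,
so the quantity to bound is exactly $\bigl|\sum_{i=\ell}^\infty \chi_{s,t}\mat{P}^{i}\mat{D}^{-1}\chi_{s,t}^\top\bigr|$.

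Next, I would reuse the identity $\mat{P}^i\mat{D}^{-1}=\mat{D}^{-1/2}\mat{Q}^i\mat{D}^{-1/2}$ and the expansion $\chi_{s,t}\mat{D}^{-1/2}=\sum_{j=2}^n\alpha_j w_j$ (recall $\alpha_1=0$) established just before the claim. Orthonormality of the $w_j$ then gives $\chi_{s,t}\mat{P}^i\mat{D}^{-1}\chi_{s,t}^\top=\sum_{j=2}^n \alpha_j^2 \lambda_j^i$. Since Assumption~\ref{assume:expander} (ergodicity) implies $|\lambda_j|\le\lambda<1$ for $j\ge 2$, I can swap the two summations and evaluate the geometric series to obtain
\[
\sum_{i=\ell}^\infty \chi_{s,t}\mat{P}^{i}\mat{D}^{-1}\chi_{s,t}^\top \;=\; \sum_{j=2}^n \alpha_j^2 \cdot \frac{\lambda_j^\ell}{1-\lambda_j}.
\]

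Then I would take absolute values and use two elementary bounds: $|\lambda_j|^\ell \le \lambda^\ell$ and $|1-\lambda_j|\ge 1-\lambda$ (which holds for both signs of $\lambda_j$, since $1-\lambda_j\ge 1-\lambda_2\ge 1-\lambda$ when $\lambda_j\ge 0$ and $1-\lambda_j\ge 1\ge 1-\lambda$ otherwise). Combining these with the already observed inequality $\sum_{j=2}^n\alpha_j^2\le \|\chi_{s,t}\|_2^2=2$ yields
\[
\Bigl|\sum_{i=\ell}^\infty \chi_{s,t}\mat{P}^{i}\mat{D}^{-1}\chi_{s,t}^\top\Bigr| \;\le\; \frac{2\lambda^\ell}{1-\lambda}.
\]
Finally, substituting $\ell=\log(4/(\varepsilon-\varepsilon\lambda))/\log(1/\lambda)$ gives $\lambda^\ell=\varepsilon(1-\lambda)/4$, so the tail is at most $\varepsilon/2$, which is exactly the claim.

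I do not expect any real obstacle: the only subtle point is handling the sign of $\lambda_j$ when bounding $|1-\lambda_j|$ from below, which is the reason the definition of $\lambda$ uses $\max\{|\lambda_2|,|\lambda_n|\}$ rather than just $\lambda_2$. Everything else is an application of Lemma~\ref{lemma:neumann} plus a geometric-series estimate.
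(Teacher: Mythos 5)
Your proposal is correct and follows essentially the same route as the paper: both identify the quantity as the Neumann-series tail, expand it via the spectral decomposition as $\sum_{i\ge\ell}\sum_{j\ge 2}\alpha_j^2\lambda_j^i$, and bound it by $2\lambda^\ell/(1-\lambda)\le\varepsilon/2$ using $\sum_{j\ge 2}\alpha_j^2\le 2$ and the choice of $\ell$. The only cosmetic difference is that you sum the geometric series per eigenvalue before bounding, whereas the paper first bounds $|\lambda_j^i|\le\lambda^i$ and then sums; the resulting estimate is identical.
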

\begin{proof}
	It holds that
	\begin{align*}
		&\lrabs{R_G(s,t) - \sum_{i=0}^{\ell-1} \chi_{s,t} \mat{P}^i \cdot \mat{D}^{-1}\chi_{s,t}^\top} \\
		& = \lrabs{\sum_{i=0}^{\infty} \chi_{s,t} \mat{P}^i \cdot \mat{D}^{-1}\chi_{s,t}^\top - \sum_{i=0}^{\ell-1} \chi_{s,t} \mat{P}^i \cdot \mat{D}^{-1}\chi_{s,t}^\top}\\
		&= \lrabs{\sum_{i=\ell}^\infty \chi_{s,t}\mat{D}^{-1/2}\mat{Q}^i \mat{D}^{-1/2} \chi_{s,t}^\top}=\lrabs{\sum_{i=\ell}^\infty \sum_{j=2}^n \alpha_j^{2} \lambda_j^i} \\
		&\leq \sum_{i=\ell}^\infty\lambda^i \sum_{j=2}^n \alpha_j^{2} \leq \frac{2\lambda^\ell}{1-\lambda}  \leq \frac{\varepsilon}{2},
	\end{align*}
	where the last inequality follows from $\ell=\frac{\log (4/(\varepsilon-\varepsilon\lambda))}{\log (1/\lambda)}$.
\end{proof}

\begin{claim}
	With probability at least $9/10$,
	\[
	\lrabs{\hat{\delta}_{s,t}-\sum_{i=0}^{\ell-1} \chi_{s,t} \mat{P}^i \cdot \mat{D}^{-1}\chi_{s,t}^\top}\leq \frac{\varepsilon}{2}.
	\]
\end{claim}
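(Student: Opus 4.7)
The plan is to show that each per-step estimator $\hat{\delta}_{s,t}^{(i)}$ is unbiased for $\chi_{s,t} \mat{P}^i \mat{D}^{-1}\chi_{s,t}^\top$, then use a Hoeffding bound to control its deviation, and finally union bound over $i = 0, 1, \dots, \ell-1$ followed by the triangle inequality. The resulting error will be bounded by $\ell \cdot \frac{\varepsilon}{2\ell} = \frac{\varepsilon}{2}$ with probability at least $9/10$.

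\textbf{Step 1 (unbiasedness).} First I would expand
\[
\chi_{s,t} \mat{P}^i \mat{D}^{-1}\chi_{s,t}^\top = \frac{\mat{P}^i(s,s)}{\deg(s)} - \frac{\mat{P}^i(s,t)}{\deg(t)} - \frac{\mat{P}^i(t,s)}{\deg(s)} + \frac{\mat{P}^i(t,t)}{\deg(t)},
\]
where $\mat{P}^i(u,v)$ is the probability that an $i$-step random walk from $u$ terminates at $v$. Because each of $X_{i,s}, X_{i,t}$ is a sum of $r$ i.i.d.\ Bernoulli indicators for the endpoints of walks from $s$ (and similarly $Y_{i,s}, Y_{i,t}$ for walks from $t$), taking expectations directly gives $\E[\hat{\delta}_{s,t}^{(i)}] = \chi_{s,t} \mat{P}^i \mat{D}^{-1}\chi_{s,t}^\top$.

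\textbf{Step 2 (concentration).} For the $k$-th pair of walks (one from $s$, one from $t$) of length $i$, let $W^s_k$ and $W^t_k$ denote their endpoints, and set
\[
Z_k^{(i)} = \frac{\1[W^s_k=s]}{\deg(s)} - \frac{\1[W^s_k=t]}{\deg(t)} - \frac{\1[W^t_k=s]}{\deg(s)} + \frac{\1[W^t_k=t]}{\deg(t)},
\]
so that $\hat{\delta}_{s,t}^{(i)} = \frac{1}{r}\sum_{k=1}^{r} Z_k^{(i)}$. Since each walk contributes at most one nonzero indicator and $\deg(\cdot) \geq 1$, we have $|Z_k^{(i)}| \leq 2$, i.e.\ $Z_k^{(i)} \in [-2,2]$. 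Hoeffding's inequality then yields
\[
\Pr\Bigl[\bigl|\hat{\delta}_{s,t}^{(i)} - \E[\hat{\delta}_{s,t}^{(i)}]\bigr| > \tfrac{\varepsilon}{2\ell}\Bigr] \leq 2\exp\!\Bigl(-\tfrac{r\varepsilon^2}{32\ell^2}\Bigr).
\]
Plugging in $r = 40\ell^2 \log(80\ell)/\varepsilon^2$, the exponent becomes $\tfrac{5}{4}\log(80\ell)$, so the failure probability for a single $i$ is at most $2(80\ell)^{-5/4}$.

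\textbf{Step 3 (union bound and triangle inequality).} Summing over the $\ell$ values of $i$ gives total failure probability at most $2\ell \cdot (80\ell)^{-5/4} = 2 \cdot 80^{-5/4}\ell^{-1/4} \leq 1/10$. On the complementary event, the triangle inequality gives
\[
\Bigl|\hat{\delta}_{s,t} - \sum_{i=0}^{\ell-1}\chi_{s,t} \mat{P}^i \mat{D}^{-1}\chi_{s,t}^\top\Bigr| \leq \sum_{i=0}^{\ell-1}\bigl|\hat{\delta}_{s,t}^{(i)} - \E[\hat{\delta}_{s,t}^{(i)}]\bigr| \leq \ell \cdot \frac{\varepsilon}{2\ell} = \frac{\varepsilon}{2}.
\]

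The main thing to be careful about is the bookkeeping in Step 2: choosing the right per-coordinate accuracy $\varepsilon/(2\ell)$ so that the union bound over $\ell$ indices balances with the prescribed $r$, and verifying that the $|Z_k^{(i)}| \leq 2$ boundedness correctly accounts for the $1/\deg(\cdot)$ factors even when $s$ and $t$ can have small degree. There is no real obstacle beyond this constant-tracking exercise, since the unbiasedness calculation is immediate from the definition of $\mat{P}^i$.
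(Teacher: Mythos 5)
Your proposal is correct and follows essentially the same route as the paper: expand $\chi_{s,t}\mat{P}^i\mat{D}^{-1}\chi_{s,t}^\top$ into the four transition probabilities, observe unbiasedness of the empirical counts, apply a Hoeffding bound at accuracy $\Theta(\varepsilon/\ell)$ per step, and finish with a union bound and the triangle inequality. The only (cosmetic) difference is that you bundle the four indicators into one variable $Z_k^{(i)}\in[-2,2]$ and invoke Hoeffding once per $i$, whereas the paper bounds each of the four counts separately with threshold $\varepsilon/(8\ell)$ and unions over $4\ell$ events; both yield failure probability at most $1/10$ with the prescribed $r$.
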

\begin{proof}
	We observe that for any $i\geq 0$,
	\begin{align*}
		\chi_{s,t} \mat{P}^i \cdot \mat{D}^{-1}\chi_{s,t}^\top & = (\1_s-\1_t)P^i \cdot \mat{D}^{-1} {(\1_s-\1_t)}^\top \\
		&= \frac{\1_s P^i\1_s^\top}{\deg(s)} -  \frac{\1_s \mat{P}^i\1_t^\top}{\deg(t)} -\frac{\1_t \mat{P}^i\1_s^\top}{\deg(s)} + \frac{\1_t \mat{P}^i\1_t^\top}{\deg(t)}.
	\end{align*}
	
	Note that for any $0\leq i\leq \ell-1$, in the algorithm, we perform $r$ random walks of length $i$ from $s$. Since $X_{i,s}$ is the number of walks that end at $s$ and $\1_s \mat{P}^i\1_s^\top$ is exactly the probability of a random walk of length $i$ from $s$ ends at $s$, we have that
	\[
	\mathbb{E}X_{i,s}=r \cdot \1_s \mat{P}^i\1_s^\top.
	\]
	Furthermore, by the Chernoff-Hoeffding bound,
	\begin{align*}
		&\mathbb{P}\left[\lrabs{\frac{X_{i,s}}{r \deg(s)}-\frac{\1_s \mat{P}^i\1_s^\top}{\deg(s)}}\geq\frac{\varepsilon}{8\ell } \right]  =\mathbb{P}\left[\lrabs{\frac{X_{i,s}}{r \deg(s)}-\frac{\mathbb{E}[X_{i,s}]}{r \deg(s)}}\geq\frac{\varepsilon}{8\ell } \right]\\
		&=\mathbb{P}\left[\lrabs{{X_{i,s}}-{\mathbb{E}[X_{i,s}]}}\geq\frac{r \deg(s)\varepsilon}{8\ell } \right]\\
		&\leq 2 \exp(-2{\deg(s)}^2\varepsilon^2 r^2/(64\ell^2 r))\leq 2 \exp(-\varepsilon^2r/(32\ell^2)) \leq \frac{1}{40\ell},
	\end{align*}
	where the last inequality follows from $r=40\ell^2 (\log (80\ell))/\varepsilon^2$.
	Similarly,
	\[
	\mathbb{P}\left[\lrabs{\frac{X_{i,t}}{r\deg(t)}-\frac{\1_s \mat{P}^i\1_t^\top}{\deg(t)}}\geq\frac{\varepsilon}{8\ell } \right]\leq \frac{1}{40\ell},
	\]
	\[
	\mathbb{P}\left[\lrabs{\frac{Y_{i,s}}{r\deg(s)}-\frac{\1_t \mat{P}^i\1_s^\top}{\deg(s)}}\geq\frac{\varepsilon}{8\ell } \right]\leq \frac{1}{40\ell},
	\]
	\[
	\mathbb{P}\left[\lrabs{\frac{Y_{i,t}}{r\deg(t)}-\frac{\1_t \mat{P}^i\1_t^\top}{\deg(t)}}\geq\frac{\varepsilon}{8\ell } \right]\leq \frac{1}{40\ell}.
	\]
	Thus by a union bound, it holds that
	\begin{align*}
		&\lrabs{\hat{\delta}_{s,t}-\sum_{i=0}^{\ell-1} \chi_{s,t} \mat{P}^i \cdot \mat{D}^{-1}\chi_{s,t}^\top}\\
		&=\Biggl|\sum_{i=0}^{\ell-1}\left( \frac{X_{i,s}}{r \deg(s)} -\frac{X_{i,t}}{r \deg(t)}-\frac{Y_{i,s}}{r \deg(s)} +\frac{Y_{i,t}}{r \deg(t)}   \right) \\
		& - \sum_{i=0}^{\ell-1}\left( \frac{\1_s \mat{P}^i\1_s^\top}{\deg(s)} -  \frac{\1_s \mat{P}^i\1_t^\top}{\deg(t)} -\frac{\1_t \mat{P}^i\1_s^\top}{\deg(s)} + \frac{\1_t \mat{P}^i\1_t^\top}{\deg(t)}  \right)\Biggr|\\
		&\leq \frac{\varepsilon}{8\ell}\cdot \ell \cdot 4=\frac{\varepsilon}{2}
	\end{align*}
	with probability $1-4\cdot \ell \cdot \frac{1}{40\ell}=\frac{9}{10}$.
\end{proof}
Therefore, with probability at least $9/10$, it holds that
\[
\abs{R_G(s,t)-\hat{\delta}_{s,t}} \leq \varepsilon.
\]
This finishes the proof of Theorem~\ref{thm:esteff-expander}.

\paragraph{\bf The second algorithm: \textsc{EstEff-TranProb-Collision}} In the previous algorithm, we used the simple Monte Carlo approach to approximate the transition probabilities (which correspond to Line 4 and 5 in Algorithm~\ref{alg:tranprob}). %
Now we give a more efficient procedure to estimate the transition probability $\1_s \mat{P}^i\1_t^\top$. Such an algorithm is based on the idea of treating the term $\1_s \mat{P}^i\1_t^\top$ (roughly) as a collision probability of two random walks of length $i/2$, starting from $s$ and $t$, respectively. In particular, if $p=\1_s \mat{P}^i\1_t^\top$, then for typical vertices $s,t$, we can approximates the probability $p$ in $O(1/\sqrt{p})$ time, in contrast to the $O(1/p)$ time from the Monte Carlo approach. This idea of approximating transition probability has been given in~\cite{banerjee2015fast}. We use this idea to present a new algorithm whose performance guarantee is given in the following theorem.
\begin{theorem}\label{thm:expander-II}
	Suppose that Assumption~\ref{assume:expander} holds. Suppose further that for any $i\leq \ell$,
	\[
	\norm{\1_s \mat{P}^i \mat{D}^{-1/2}}_2^2, \norm{\1_t \mat{P}^i \mat{D}^{-1/2}}_2^2\leq \beta_i,
	\]
	for some parameters $\beta_i$'s.
	The Algorithm~\ref{alg:expander2} (i.e., \Call{EstEff-TranProb-Collision}{$G,s,t$}) outputs an estimate $\hat{\delta}_{s,t}$ such that with probability at least $9/10$, it holds that
	\[
	\abs{R_G(s,t)-\hat{\delta}_{s,t}} \leq \varepsilon.
	\]
	
	The running time and query complexity of the algorithm are $O(\sum_{i=0}^{\ell-1}r_i)={O(\frac{\ell^{3/2}}{\varepsilon}\sum_{i=0}^{\ell-1}\sqrt{\beta_i}+\frac{\ell^3}{\varepsilon^2}\sum_{i=0}^{\ell-1}\beta_i^{3/2})}$.
\end{theorem}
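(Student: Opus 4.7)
The plan is to follow the same two-step blueprint as in the proof of Theorem~\ref{thm:esteff-expander}. The truncation claim there (first claim in the proof of Theorem~\ref{thm:esteff-expander}) depended only on the Neumann expansion and on the bound $\sum_{j\geq 2}\alpha_j^{2}\leq 2$, so it transfers verbatim: under Assumption~\ref{assume:expander} and with the choice of $\ell$,
\[
\Bigl|R_G(s,t)-\sum_{i=0}^{\ell-1}\chi_{s,t}\mat{P}^i\mat{D}^{-1}\chi_{s,t}^\top\Bigr|\leq \frac{\varepsilon}{2}.
\]
It therefore suffices to estimate each of the four scalars $(\mat{P}^i)_{u,v}/\deg(v)$ for $(u,v)\in\{s,t\}^2$ and each $i=0,\ldots,\ell-1$ to additive error $\varepsilon/(8\ell)$ with failure probability at most $1/(40\ell)$, and then combine via a union bound exactly as in the second claim of the proof of Theorem~\ref{thm:esteff-expander}.

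The new ingredient is the collision-based estimator replacing the naive Monte Carlo estimator of Algorithm~\ref{alg:tranprob}. Using the reversibility identity $\mat{P}^i\mat{D}^{-1}=\mat{D}^{-1/2}\mat{Q}^i\mat{D}^{-1/2}$ with $\mat{Q}$ symmetric, and splitting $\mat{Q}^i=\mat{Q}^a\mat{Q}^b$ with $a=\lceil i/2\rceil$, $b=\lfloor i/2\rfloor$, one obtains
\[
\frac{(\mat{P}^i)_{s,t}}{\deg(t)}=\sum_{v}\frac{(\mat{P}^a)_{s,v}\,(\mat{P}^b)_{t,v}}{\deg(v)}.
\]
Algorithm~\ref{alg:expander2} performs $r_i$ independent walks of length $a$ from $s$ with endpoints $X_1,\ldots,X_{r_i}$ and $r_i$ walks of length $b$ from $t$ with endpoints $Y_1,\ldots,Y_{r_i}$, and uses the unbiased estimator
\[
\hat{p}_i^{(s,t)}=\frac{1}{r_i^{2}}\sum_{j,k=1}^{r_i}\frac{\mathbf{1}[X_j=Y_k]}{\deg(X_j)},
\]
which is precisely the collision-based estimator of Banerjee and Lofgren~\cite{banerjee2015fast}; the three other scalars $(u,v)\in\{(s,s),(t,s),(t,t)\}$ are estimated in the same way, reusing the random walks.

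The crux of the argument is to bound $\mathrm{Var}[\hat{p}_i^{(s,t)}]$ in terms of $\beta_a,\beta_b$. The $r_i^{4}$ pairs-of-pairs decompose into diagonal pairs ($j=j'$, $k=k'$), pairs sharing only $X_j$, pairs sharing only $Y_k$, and fully independent pairs; the last contribute zero covariance and the first three are controlled by sums of the type $\sum_v (\mat{P}^a)_{s,v}^{2}/\deg(v)\leq\beta_a$ and $\sum_v (\mat{P}^b)_{t,v}^{2}/\deg(v)\leq\beta_b$, after using $\deg\geq 1$ and Cauchy--Schwarz. A careful moment calculation then yields a variance of the form $O\bigl(\beta_a\beta_b/r_i^{2}+\sqrt{\beta_a\beta_b}\cdot p_i/r_i\bigr)$ where $p_i$ is the true probability, itself bounded by $\sqrt{\beta_a\beta_b}$ via Cauchy--Schwarz. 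Applying Chebyshev's inequality with error $\varepsilon/(8\ell)$ and confidence $1/(40\ell)$ then forces
\[
r_i=\Theta\!\left(\frac{\ell^{3/2}\sqrt{\beta_i}}{\varepsilon}+\frac{\ell^{3}\beta_i^{3/2}}{\varepsilon^{2}}\right),
\]
using $\beta_a,\beta_b\leq\beta_i$ (up to the indexing convention). A final union bound over the four index pairs and $i\in\{0,\ldots,\ell-1\}$ combines the sampling error and the truncation error into $|R_G(s,t)-\hat\delta_{s,t}|\leq\varepsilon$ with probability at least $9/10$, while $\sum_i r_i$ matches the stated running time.

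The main technical obstacle will be the variance computation for $\hat p_i^{(s,t)}$: the diagonal and the two "shared-endpoint" contributions must each be related to the spectral quantities $\beta_a,\beta_b$ in order to recover the $\sqrt{\beta_i}$-scaling that is responsible for the $1/\sqrt{p}$-type speedup over the plain Monte Carlo estimator of Theorem~\ref{thm:esteff-expander}; everything else (truncation, union bound, boundary cases such as $i=0$ where $\mat{P}^0=\I$ is handled exactly) is a direct reuse of the template already established in the proof of Theorem~\ref{thm:esteff-expander}.
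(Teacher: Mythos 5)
Your proposal follows the paper's proof essentially verbatim: the same truncation bound and four-scalar reduction inherited from the proof of Theorem~\ref{thm:esteff-expander}, the same splitting identity $\frac{(\mat{P}^i)_{s,t}}{\deg(t)}=\langle \1_s\mat{P}^{\lceil i/2\rceil}\mat{D}^{-1/2},\,\1_t\mat{P}^{\lfloor i/2\rfloor}\mat{D}^{-1/2}\rangle$, the same collision estimator, and the same Chebyshev argument with the same choice of $r_i$ --- the only difference being that the paper imports the variance bound $\frac{\beta_i}{r_i^2}+\frac{2\beta_i^{3/2}}{r_i}$ from Lemma~19 of~\cite{clusterability} rather than deriving it from the diagonal/shared-endpoint decomposition you sketch. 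The one thing to tighten is your intermediate variance expression, which is written slightly loosely (the diagonal term is controlled by $\sqrt{\beta_a\beta_b}/r_i^2$, not $\beta_a\beta_b/r_i^2$), though your decomposition and the final value of $r_i$ agree with the paper's.
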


\paragraph{On the choice of $\beta_i$:} Note that the algorithm is parametrized by $\beta_i$'s. We note that for expander graphs or graphs with low mixing time, it holds that $\beta_i$ is a number that exponentially decreases in terms of $i$, i.e,. $\beta_i \leq c^i$ for some constant $c<1$, as long as $\ell$ is not too large. The reason is that in an expander graph $G$ with $\phi(G)\geq \phi$ for some constant $\phi$, it holds that $||\1_s \mat{P}^i \mat{D}^{-1/2}||^2_2\leq\frac{1}{\vol(V_G)}+(1-\frac{\phi^2}{4})^{2i}$ for any starting vertex $s$ (see e.g.,~\cite{Chung:1997}). Therefore, in this case, the running time in Theorem~\ref{thm:expander-II} will be dominated by $O(\frac{\ell^3}{\varepsilon^2})$, which is faster than Algorithm~\ref{alg:tranprob}.    %

\begin{algorithm}[h!]
	\caption{\textsc{EstEff-TranProb-Collision}$(G,\varepsilon,s,t)$}\label{alg:expander2}
	\DontPrintSemicolon
	$\ell \gets \frac{\log (4/\varepsilon(1-\lambda))}{\log (1/\lambda)}$\;
	\For{$i:=0,1,\ldots,\ell-1$}{
		{$r_i\gets 20000(\sqrt{\frac{\ell^3\beta_i}{\varepsilon^2}}+\frac{\ell^3\beta_i^{3/2}}{\varepsilon^2})$}\;
		Perform $r_i$ independent random walks of length $i_1:=\lceil i/2\rceil$ starting at $s$ (resp., $t$), and let $\overrightarrow{X}_{s}\in \mathbb{R}^V$ (resp., $\overrightarrow{X}_{t}\in \mathbb{R}^V$) be a
		\emph{row vector} whose $v$'th component is the fraction of random walks from $s$ (resp., $t$) that end up at $v$, {divided by $\sqrt{\deg(v)}$} \;
		Perform $r$ independent random walks of length $i_2:=\lfloor i/2 \rfloor$ starting at $s$ (resp., $t$), and let $\overrightarrow{Y}_{s}\in \mathbb{R}^V$ (resp., $\overrightarrow{Y}_{t}\in \mathbb{R}^V$) be a
		\emph{row vector} whose $v$'th component is the fraction of random walks from $s$ (resp., $t$) that end up at $v$, {divided by $\sqrt{\deg(v)}$} \;
		Set $\hat{\delta}_{s,t}^{(i)}= {\overrightarrow{X}_{s}\cdot \overrightarrow{Y}_{s}^\top -\overrightarrow{X}_{s}\cdot \overrightarrow{Y}_{t}^\top-\overrightarrow{X}_{t}\cdot \overrightarrow{Y}_{s}^\top +\overrightarrow{X}_{t}\cdot \overrightarrow{Y}_{t}^\top}$\;
	}
	\Return{$\hat{\delta}_{s,t}= \sum_{i=0}^{\ell-1}\hat{\delta}_{s,t}^{(i)}$}
\end{algorithm}

\begin{proof}[\bf Proof of Theorem~\ref{thm:expander-II}]
	W.l.o.g.\ we consider the case that the length $i$ of the random walk is even. Note that for any $s,t$,
	\begin{align*}
		\1_s \mat{P}^i\1_t^\top&=\1_s {(\mat{D}^{-1}\mat{A})}^i\1_t^\top =\1_s {(\mat{D}^{-1}\mat{A})}^{i/2} \mat{D}^{-1}{\left({(\mat{D}^{-1}\mat{A})}^\top\right)}^{i/2}\mat{D}\1_t^\top\\
		&=\1_s \mat{P}^{i/2} \mat{D}^{-1}{(\mat{P}^\top)}^{i/2}\mat{D}\1_t^\top=\1_s \mat{P}^{i/2} \mat{D}^{-1}{(\mat{P}^\top)}^{i/2}\mat{D}\1_t^\top\\
		&=\langle \1_s \mat{P}^{i/2} \mat{D}^{-1/2}, \1_t \mat{D} \mat{P}^{i/2}\mat{D}^{-1/2}\rangle\\
		&=\deg(t)\cdot \langle \1_s \mat{P}^{i/2} \mat{D}^{-1/2}, \1_t  \mat{P}^{i/2}\mat{D}^{-1/2}\rangle.
	\end{align*}
	Thus,
	\[
	\frac{\1_s \mat{P}^i\1_t^\top}{\deg(t)}=\langle \1_s \mat{P}^{i/2} \mat{D}^{-1/2}, \1_t  \mat{P}^{i/2}\mat{D}^{-1/2}\rangle.
	\]
	
	Note that for any vertex $v$, the quantity  $[\1_s \mat{P}^{i/2}\mat{D}^{-1}](v)$ is the probability of a length-$(i/2)$ random walk that starts from $s$ and ends at vertex $v$, divided by $\sqrt{\deg(v)}$; and the quantity $[\1_t \mat{P}^{i/2}](v)$ is the probability of a length-$(i/2)$ random walk that starts from $s$ and ends at vertex $v$, divided by $\sqrt{\deg(v)}$.
	
	Now we use the argument in the proof of Lemma 19 in~\cite{clusterability}. Specifically, let $Z_{s,t}=\overrightarrow{X}_{s}\cdot \overrightarrow{Y}_{t}^\top$, where $\overrightarrow{X}_{s}\cdot \overrightarrow{Y}_{t}^\top$ are defined in Algorithm~\ref{alg:expander2}. Then $\mathbb{E}(Z_{s,t})=(\mat{D}^{-1/2}\mat{P}^i\1_a)^\top(\mat{D}^{-1/2}P^i\1_a)$. By Chebyshev's inequality and Lemma 19 in~\cite{clusterability}, we get $\mathbb{P}[|Z_{s,t}-\mathbb{E}(Z_{s,t})|>\frac{\varepsilon}{8\ell}]<(\frac{8\ell}{\varepsilon})^2(\frac{\beta_i}{r_i^2}+\frac{2\beta_i^{3/2}}{r_i})\leq \frac{1}{40\ell}$, as we have chosen  $r_i=20000(\sqrt{\frac{\ell^3\beta_i}{\varepsilon^2}}+\frac{\ell^3\beta_i^{3/2}}{\varepsilon^2})$ in the algorithm. Then the statement of the theorem follows by analogous argument from the proof of Theorem~\ref{thm:esteff-expander}.
\end{proof}

Finally, we remark that the success probabilities of both algorithms  \textsc{EstEff-TranProb} and  \textsc{EstEff-TranProb-Collision} can be boosted to $1-\frac{1}{\poly(n)}$ by standard median trick, i.e., repeatedly run these algorithms $O(\log n)$ times and output the median. On graphs with bounded mixing time, which correspond to graphs such that $1-\lambda\geq \frac{1}{\poly(\log n)}$, the algorithms run in $O(\poly(\log n/\varepsilon))$ time.

\subsection{Algorithms based on commute times of random walks}
In this section, we provide two algorithms based on the well known connections between effective resistances and commute time/visiting probability in random walks. Let $\gamma>0$ be a threshold parameter. 
\paragraph{\bf The first algorithm: \textsc{EstEff-MC}}
We can use the commute time $\kappa(s,t)$ to approximate $R_G(s,t)$. We make use of the following results.
\begin{lemma}[\cite{nash1959random,chandra1996electrical}]\label{lemma:eff_commute}
	It holds that $\kappa(s,t)=2m R_G(s,t)$.
\end{lemma}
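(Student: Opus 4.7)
The plan is to give the classical electrical-network proof due to Chandra et al., using the fact that hitting times are harmonic functions on $G$ that match the potentials induced by an appropriately chosen current injection.

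First I would set up the hitting time characterization. Let $H(v,u)$ denote the expected number of steps for a random walk starting at $v$ to first reach $u$, so that $\kappa(s,t)=H(s,t)+H(t,s)$. By first-step analysis, for every $v\neq u$ we have
\[
H(v,u)=1+\frac{1}{\deg(v)}\sum_{w\sim v}H(w,u),
\]
with boundary condition $H(u,u)=0$.

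Next I would introduce the electrical network on $G$ with unit resistance on every edge. For a fixed sink $u$, inject $\deg(v)$ units of current at every vertex $v$ and extract $2m$ units at $u$; let $\phi^{(u)}$ be the resulting potential, normalized so that $\phi^{(u)}_u=0$. Kirchhoff's current law at every $v\neq u$ gives $\sum_{w\sim v}(\phi^{(u)}_v-\phi^{(u)}_w)=\deg(v)$, i.e.
\[
\phi^{(u)}_v=1+\frac{1}{\deg(v)}\sum_{w\sim v}\phi^{(u)}_w,
\]
which is the same recurrence with the same boundary value as $H(\cdot,u)$. Since the associated linear system has a unique solution (the graph Laplacian has one-dimensional kernel spanned by $\1$, and we've fixed one coordinate), we conclude $\phi^{(u)}_v=H(v,u)$ for every $v$.

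Then I would apply superposition to the difference $\phi:=\phi^{(t)}-\phi^{(s)}$. The per-vertex injections $\deg(v)$ cancel at every $v\notin\{s,t\}$, leaving a network that injects exactly $2m$ units at $s$ and extracts $2m$ units at $t$. By definition of effective resistance, the resulting potential difference satisfies $\phi_s-\phi_t=2m\cdot R_G(s,t)$. On the other hand,
\[
\phi_s-\phi_t=\bigl(H(s,t)-H(s,s)\bigr)-\bigl(H(t,t)-H(t,s)\bigr)=H(s,t)+H(t,s)=\kappa(s,t),
\]
which yields $\kappa(s,t)=2m\,R_G(s,t)$.

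The only real obstacle is bookkeeping: one must be careful that the recurrence for $H(v,u)$ matches Kirchhoff's law \emph{exactly} (this hinges on choosing the injection pattern $\deg(v)$, which is precisely what cancels the $\deg(v)$ in the denominator of the first-step equation), and that the signs in the superposition step line up so that the net current is $2m$ from $s$ to $t$. Once that setup is in place, Ohm's law and the definition of effective resistance deliver the equality directly.
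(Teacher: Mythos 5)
Your proof is correct: it is precisely the classical harmonic-function/superposition argument of Chandra et al., which is the source the paper cites for this lemma. The paper itself states the result without proof (deferring to \cite{nash1959random,chandra1996electrical}), so there is nothing further to compare; your derivation faithfully reproduces the standard argument, including the key choice of injecting $\deg(v)$ units at each vertex so that Kirchhoff's law matches the first-step recurrence for hitting times.
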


\begin{lemma}[Proposition 2.3 in~\cite{lovasz1993random}]\label{lemma:commute}
	The probability that a random walk starting at $s$ visits $t$ before returning to $s$ is $1/(\kappa(s,t)\cdot \pi(s))$, where $\pi(s)=\frac{\deg(s)}{2m}$ is the stationary probability of $s$.
\end{lemma}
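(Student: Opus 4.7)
The plan is to use the strong Markov property at the return time to $s$ together with Kac's formula for the mean return time, namely $\mathbb{E}[\tau_s^+] = 1/\pi(s)$, where $\tau_s^+$ is the first return time to $s$. Let $p$ denote the probability that the walk from $s$ visits $t$ before returning to $s$, and let $A$ be this event. I will decompose the first excursion from $s$ (the path up to $\tau_s^+$) according to whether $A$ occurs.

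First I would write $\mathbb{E}[\tau_s^+] = \mathbb{E}[\tau_s^+; A] + \mathbb{E}[\tau_s^+; A^c]$. On the event $A$, the excursion splits at the first visit to $t$ into an initial segment from $s$ to $t$ and a subsequent segment from $t$ back to $s$. Applying the strong Markov property at the hitting time of $t$ to the second segment, the conditional expected length of that segment equals $\mathbb{E}[T_{ts}]$, the mean hitting time from $t$ to $s$. Hence $\mathbb{E}[\tau_s^+; A] = \mathbb{E}[\tau_{st}; A] + p\,\mathbb{E}[T_{ts}]$, where $\tau_{st}$ is the first hitting time of $t$ from $s$.

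Next I would compute $\mathbb{E}[T_{st}]$ by excursion decomposition: at each return to $s$ without having visited $t$, the Markov property lets the process restart. This yields $\mathbb{E}[T_{st}] = \mathbb{E}[\tau_{st}; A] + \mathbb{E}[\tau_s^+; A^c] + (1-p)\,\mathbb{E}[T_{st}]$, so $p\,\mathbb{E}[T_{st}] = \mathbb{E}[\tau_{st}; A] + \mathbb{E}[\tau_s^+; A^c]$. Substituting the previous identity, I get
\[
p\,\mathbb{E}[T_{st}] = \mathbb{E}[\tau_s^+; A] - p\,\mathbb{E}[T_{ts}] + \mathbb{E}[\tau_s^+; A^c] = \mathbb{E}[\tau_s^+] - p\,\mathbb{E}[T_{ts}].
\]
Rearranging and using $\kappa(s,t) = \mathbb{E}[T_{st}] + \mathbb{E}[T_{ts}]$ gives $p\cdot \kappa(s,t) = \mathbb{E}[\tau_s^+] = 1/\pi(s)$, which is the claim.

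The main obstacle, as in many proofs of this flavor, is being careful with the conditioning: one must verify that applying the strong Markov property at the hitting time of $t$ on the event $A$ really does produce an independent copy of the hitting-time process from $t$, and similarly for the restart at each failed excursion from $s$. Once these applications of the strong Markov property are justified, the remaining algebra is elementary. Since this is a classical fact (e.g., Proposition~2.3 of Lovász's survey), I would invoke it directly rather than reproducing the full argument in the paper.
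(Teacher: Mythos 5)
Your argument is correct: the two excursion decompositions combine with Kac's formula $\mathbb{E}[\tau_s^+]=1/\pi(s)$ to give $p\left(\mathbb{E}[T_{st}]+\mathbb{E}[T_{ts}]\right)=1/\pi(s)$, and both invocations of the strong Markov property are legitimate because the event $A$ is measurable with respect to the stopped $\sigma$-field at the hitting time of $t$ (respectively at $\tau_s^+$), while all hitting times have finite expectation on a finite connected graph, so the algebraic cancellation is justified. The paper offers no proof of this lemma at all --- it is imported verbatim as Proposition~2.3 of~\cite{lovasz1993random} --- so your closing decision to cite it rather than reprove it matches the authors exactly, and your sketch is essentially the classical renewal/commute-cycle argument underlying that proposition.
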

We obtain the following corollary by Lemmas~\ref{lemma:eff_commute} and~\ref{lemma:commute}.
\begin{corollary}\label{cor:monte_carlo}
	The probability $p(s,t)$ that a random walk starting at $s$ visits $t$ before returning to $s$ is $\frac{1}{R_G(s,t)\cdot \deg(s)}$. In particular, if $R_G(s,t)\leq \gamma$, then $p(s,t)\geq \frac{1}{\gamma \cdot \deg(s)}$.
\end{corollary}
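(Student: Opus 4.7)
The plan is very short because the statement is essentially a one-step algebraic consequence of the two preceding lemmas. First, I would apply Lemma~\ref{lemma:commute} to express the hitting-before-return probability as $p(s,t) = \frac{1}{\kappa(s,t)\cdot \pi(s)}$, using the standard fact that for the simple random walk on an undirected graph the stationary probability is $\pi(s) = \deg(s)/(2m)$ (already stated in the lemma).

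Next, I would substitute the commute-time identity $\kappa(s,t) = 2m\, R_G(s,t)$ from Lemma~\ref{lemma:eff_commute}. Plugging both expressions in gives
\[
p(s,t) \;=\; \frac{1}{2m\, R_G(s,t)\cdot \deg(s)/(2m)} \;=\; \frac{1}{R_G(s,t)\cdot \deg(s)},
\]
since the two factors of $2m$ cancel. This is the main identity of the corollary.

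For the ``in particular'' clause, I would just note that $\deg(s)\geq 1$ by the minimum-degree assumption, so $\frac{1}{R_G(s,t)\cdot\deg(s)}$ is a positive decreasing function of $R_G(s,t)$; hence $R_G(s,t)\leq \gamma$ immediately yields $p(s,t)\geq \frac{1}{\gamma\cdot\deg(s)}$. There is no genuine obstacle: the only thing to be careful about is matching conventions (the $\pi(s) = \deg(s)/(2m)$ normalization and the factor of $2m$ in the commute-time formula), both of which are already fixed by the statements of the two lemmas.
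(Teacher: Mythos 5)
Your proposal is correct and is exactly the paper's route: the paper derives the corollary by combining Lemma~\ref{lemma:eff_commute} and Lemma~\ref{lemma:commute}, with the two factors of $2m$ cancelling just as you show. The monotonicity observation for the ``in particular'' clause is the intended (and only needed) final step.
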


The corollary above suggests the Monte Carlo algorithm below. The algorithm performs a number of random walks, starting at vertex $s$. Then it essentially count how many times the random walk traverses from $s$ to $t$ and back.

\begin{algorithm}[h!]
	\caption{\textsc{EstEff-MC}$(G,s,t,\gamma,\varepsilon)$ %
	}		\label{alg:esteff-algo-II}
	\DontPrintSemicolon
	W.l.o.g.\ suppose that $\deg(s)\leq \deg(t)$\;
	$N_0\gets \frac{3\ln 6 \cdot \gamma \cdot \deg(s)}{\varepsilon ^2}$, $X\gets 0$\;
	\For{$i=1,\dots, N_0$}{
		Perform a random walk from $s$, and stop the walk
		\begin{enumerate}
			\item\label{cond:good} if the walk has visited $t$ and then returns to $s$. %
			\item or if the walk has return to $s$ before visiting $t$.
		\end{enumerate}
		If the item (\ref{cond:good}) occurs, $X\gets X+1$\;
	}
	\Return{$\frac{N_0}{\deg(s)\cdot X}$}
\end{algorithm}

\begin{theorem}\label{thm:algII}
	Assume that that $R_G(s,t)\leq \gamma$. Let $N_0=\frac{3\ln 6 \cdot \gamma \cdot \deg(s)}{\varepsilon ^2}$. Then with probability 2/3, Algorithm~\ref{alg:esteff-algo-II} (i.e., \textsc{EstEff-MC}) returns an $(1+\varepsilon )$-approximation for $R_G(s,t)$. The running time of the algorithm is $O(\frac{m\cdot \deg(s) \cdot \gamma^2}{\varepsilon^2})$.
\end{theorem}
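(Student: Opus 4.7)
The plan is to express $X = \sum_{i=1}^{N_0} X_i$ as a sum of $N_0$ i.i.d.\ Bernoulli indicators, where $X_i = 1$ iff the $i$-th random walk visits $t$ before first returning to $s$. By Corollary~\ref{cor:monte_carlo} each $X_i$ has success probability $p := \frac{1}{R_G(s,t)\,\deg(s)}$, so $\mathbb{E}[X] = N_0 p$. The hypothesis $R_G(s,t) \leq \gamma$ gives $p \geq \frac{1}{\gamma \deg(s)}$, hence $\mathbb{E}[X] \geq \frac{N_0}{\gamma \deg(s)} = \frac{3\ln 6}{\varepsilon^2}$. The sample size $N_0$ in the algorithm has been chosen precisely to make $\mathbb{E}[X]$ large enough for a Chernoff tail bound to fire at the desired failure probability.

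Next I would invoke the multiplicative Chernoff bound: $\Pr\bigl[\lvert X - \mathbb{E}[X]\rvert > \varepsilon\,\mathbb{E}[X]\bigr] \leq 2\exp(-\varepsilon^2 \mathbb{E}[X]/3) \leq 2e^{-\ln 6} = 1/3$, so with probability at least $2/3$ we have $X \in [(1-\varepsilon) N_0 p,\ (1+\varepsilon) N_0 p]$. On this event the output $\widehat{R} := \frac{N_0}{\deg(s)\,X}$ satisfies
\[
\widehat{R} \in \left[\frac{R_G(s,t)}{1+\varepsilon},\ \frac{R_G(s,t)}{1-\varepsilon}\right],
\]
which, after rescaling $\varepsilon$ by a constant, gives the claimed $(1+\varepsilon)$-multiplicative approximation. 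Note that the success event automatically implies $X \geq (1-\varepsilon)\mathbb{E}[X] > 0$, so the division by $X$ in the estimator is well-defined.

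For the running time, each walk terminates at its first return to $s$. Using $\kappa(s,t) = 2m R_G(s,t) \leq 2m\gamma$ from Lemma~\ref{lemma:eff_commute} as an upper bound on the length of a single walk (either in expectation, or with high probability via Markov's inequality applied to the hitting time and a union bound over walks), and summing over the $N_0 = O(\gamma \deg(s)/\varepsilon^2)$ independent trials yields total work $O(N_0 \cdot m\gamma) = O(m\,\deg(s)\,\gamma^2 / \varepsilon^2)$, matching the stated bound.

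The main obstacle I anticipate is not the Chernoff calculation itself, which is routine, but rather correctly identifying the distribution of $X_i$: one has to check that stopping the walk at its first return to $s$, regardless of whether $t$ was visited beforehand, makes $X_i$ exactly the indicator of the event ``the walk visits $t$ before returning to $s$'', so that Corollary~\ref{cor:monte_carlo} applies verbatim with $p(s,t) = \frac{1}{R_G(s,t)\deg(s)}$. A secondary subtlety is bounding the walk length uniformly for the running-time claim; the expected first-return time is only $1/\pi(s) = 2m/\deg(s)$, so the claimed bound follows from a looser commute-time-based truncation of the walk rather than from the raw expectation.
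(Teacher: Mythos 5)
Your proposal is correct and follows essentially the same route as the paper's proof: define $X_i$ as the indicator that the $i$-th walk visits $t$ before returning to $s$, apply Corollary~\ref{cor:monte_carlo} to get $\mathbb{E}[X]=N_0/(R_G(s,t)\deg(s))$, use the multiplicative Chernoff bound with the choice of $N_0$ to get concentration with probability $2/3$, invert to get the multiplicative guarantee on $N_0/(\deg(s)X)$ (the paper handles the $1/(1\pm\varepsilon)$ issue by running Chernoff at $\varepsilon'=\varepsilon/2$, exactly your ``rescale $\varepsilon$ by a constant''), and bound each walk's expected length by the commute time $\kappa(s,t)=2mR_G(s,t)\leq 2m\gamma$. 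Your direct evaluation of the tail as $2e^{-\ln 6}=1/3$ is in fact slightly cleaner than the paper's $\delta$-parametrization, which sets $\delta=1/3$ where consistency with $N_0$'s $\ln 6$ factor and the stated $1-2\delta$ bound really calls for $\delta=1/6$.
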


We remark that the above algorithm runs in sublinear time if $\gamma=o_n(1)$, i.e., $R_G(s,t)$ is small enough. In other words, when the two vertices $s,t$ are ``similar'' enough, our algorithm will be fast.
\begin{proof}[\bf Proof of Theorem~\ref{thm:algII}]
	In Algorithm~\ref{alg:esteff-algo-II}, let $X_i$ be the indicator variable that denotes the $i$-th random walk to be successful (where we do not abort the walk because of its length). Then $\mathbb{P}(X_i=1)=p(s,t)$ where $p(s,t)$ is as defined in Corollary~\ref{cor:monte_carlo}.
	Furthermore, let $X=\sum_{i=1}^{N_0}X_i$. Observe that $\mathbb{E}(X)=N_0\cdot p(s,t)=\frac{N_0}{R_\mathrm{{eff}}(s,t)\cdot \deg(s)}$, where we have used that $p(s,t)=\frac{1}{R_\mathrm{{eff}}(s,t)\cdot \deg(s)}$. %
	
	Next, assume that $R_\mathrm{{eff}}(s,t)\leq \gamma$; let $N_0=\frac{\ln(1/\delta)\cdot 3\cdot \gamma \cdot \deg(s)}{\varepsilon ^2}\geq \frac{\ln(1/\delta)\cdot 3\cdot R_\mathrm{{eff}}(s,t) \cdot \deg(s)}{\varepsilon ^2}$,  where $\delta>0$ is a parameter that will be specified later. Using Chernoff and union bounds we find that
	$\mathbb{P}[|X-\mathbb{E}(X)|>\varepsilon '\cdot \mathbb{E}(X)]<2\cdot e^{-\frac{\varepsilon '^2\cdot N_0}{3\cdot R_\mathrm{{eff}}(s,t)\cdot \deg(u)}}\leq 2\cdot \delta$ for any $\varepsilon ' > 0$.
	Thus, we find that with probability at least $1-2\delta$, $(1-2\varepsilon ')R_\mathrm{{eff}}(s,t)\leq \frac{N_0}{\deg(u)\cdot X}\leq (1+2\varepsilon ')R_\mathrm{{eff}}(s,t)$. Now, choosing $\varepsilon' = \varepsilon /2$ yields the desired approximation ratio.
	
	As a second step, we will show that each of the random walks in Algorithm~\ref{alg:esteff-algo-II} is expected to terminate within at most $2m\gamma$ steps. Consider the two cases in which the walks terminates. Let $\gamma_i, i\in \{1,2\}$ denote the number of steps taken in the random walk in some iteration, such that $i=1$ if the first termination criterion of the loop is fulfilled and $i=2$ in the other case. Then clearly, the number of steps taken in a random walk is $\min\{\gamma_1,\gamma_2\}$. Furthermore, it holds that $\min\{\gamma_1,\gamma_2\}\leq\gamma_1$. Note that $\mathbb{E}(\gamma_1)$ is the commute time $\kappa (s,t)$. Then we find that $\mathbb{E}(\min\{\gamma_1,\gamma_2\})\leq\mathbb{E}(\gamma_1)=\kappa (s,t)$.

	Finally, let $\delta =1/3$. Then we find that Algorithm~\ref{alg:esteff-algo-II}
	\begin{itemize}
		\item runs in expected time $R_\mathrm{{eff}}(s,t)\cdot N_0\in O(\frac{m\cdot \deg(s) \cdot \gamma^2}{\varepsilon^2})$ and
		\item with probability at least $1-\delta=2/3$, $\frac{N_0}{\deg(u)\cdot X}$ is an $(1+\varepsilon )$-approximation of $R_\mathrm{{eff}}(s,t)$.
	\end{itemize}{}
	This concludes the proof.
\end{proof}

\paragraph{\bf The second algorithm: \textsc{EstEff-MC2}}
For the special case that there is an edge $(s,t)$ between the two specified vertices $s,t$, we can also make use of the following probabilistic interpretation of effective resistance.
\begin{lemma}[\cite{Shayan_recentadvances}]
	Consider an edge $(s,t)$. Then $R_G(s,t)$ is the probability that a random walk from $s$ visits $t$ for the first time using $(s,t)$.
\end{lemma}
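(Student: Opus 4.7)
The plan is to decompose the random walk from $s$ into independent excursions away from $s$, and then compute the desired first-hit probability by combining (i) the probability that a single excursion ends at $t$ rather than returning to $s$ (which is already characterised by Corollary~\ref{cor:monte_carlo}) with (ii) the probability that a single excursion reaches $t$ by taking the edge $(s,t)$ as its very first step.

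Concretely, I would let $q$ denote the probability that a random walk started from $s$ hits $t$ before returning to $s$, and let $a$ denote the probability that such a walk reaches $t$ \emph{on its very first step via the edge $(s,t)$}. By the definition of the simple random walk, $a=1/\deg(s)$, since the first step of the walk from $s$ is uniform on the neighbours of $s$. By Corollary~\ref{cor:monte_carlo}, $q = 1/(R_G(s,t)\cdot\deg(s))$. Note that the event counted by $a$ is a subevent of the event counted by $q$ (if the first step is directly to $t$, that excursion certainly reaches $t$ before returning to $s$), so $a\leq q$, and the conditional probability that an excursion which reaches $t$ does so via the edge $(s,t)$ equals $a/q = R_G(s,t)$.

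Next I would apply the strong Markov property at each return to $s$: the walk from $s$ can be viewed as an i.i.d.\ sequence of excursions, each terminating when the walk first revisits $s$ or first reaches $t$, whichever comes first. With probability $1-q$ the excursion returns to $s$ and we start a fresh, independent excursion; with probability $q$ the excursion reaches $t$, and the walk's first visit to $t$ happens via the edge $(s,t)$ precisely when this successful excursion is of the first-step-to-$t$ type. Because returning excursions leave the walk back at $s$ with no memory, the identity of the successful excursion is an independent draw from the conditional distribution on successful excursions, so
\[
\Pr[\text{walk from } s \text{ first reaches } t \text{ via edge } (s,t)] \;=\; \frac{a}{q} \;=\; R_G(s,t).
\]

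The only mildly delicate step is the clean application of the strong Markov property to justify that the walk decomposes into i.i.d.\ excursions with the claimed termination distribution; but this is standard and follows immediately once one defines the stopping times at successive returns to $s$ and at the first hit of $t$. No genuine obstacle is expected, since the statement is essentially a rewriting of Corollary~\ref{cor:monte_carlo} together with the trivial observation that $a = 1/\deg(s)$.
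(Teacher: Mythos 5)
Your proposal is correct. Note, though, that the paper offers no proof of this lemma to compare against: it is imported verbatim from \cite{Shayan_recentadvances}. Your argument is a clean, self-contained derivation from Corollary~\ref{cor:monte_carlo}: split the walk into i.i.d.\ excursions from $s$, each ending at the first return to $s$ or the first hit of $t$; the first visit to $t$ occurs in the first successful excursion, and the conditional probability that this excursion begins with the step $s\to t$ is $a/q=\frac{1/\deg(s)}{1/(R_G(s,t)\deg(s))}=R_G(s,t)$. The only step you should spell out is why a successful excursion reaches $t$ \emph{via the edge $(s,t)$} exactly when its first step is $s\to t$: that edge can be traversed into $t$ only from $s$, and an excursion contains no intermediate return to $s$, so its first step is the sole opportunity to use it. With that observation the argument is complete (recurrence on a finite connected graph guarantees that some excursion succeeds, so the geometric sum $\sum_{k\ge 1}(1-q)^{k-1}a=a/q$ accounts for all of the probability). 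By contrast, the usual proof in sources such as \cite{Shayan_recentadvances} goes through the electrical interpretation: under a unit current flow from $s$ to $t$, the current entering $t$ along $(s,t)$ equals both the first-entry probability through that edge and the voltage drop $R_G(s,t)$. Your route buys a proof using only machinery the paper has already established, at the cost of being specific to the excursion structure at $s$; the electrical argument generalizes to first-entry probabilities along arbitrary edges incident to $t$.
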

This suggests the following Monte Carlo algorithm.

\begin{algorithm}[h!]
	\caption{\textsc{EstEff-MC2}$(G,s,t,\varepsilon,\gamma,\delta)$ %
	}\label{alg:esteff-algo-III}
	\DontPrintSemicolon
	W.l.o.g.\ suppose that $\deg(s)\leq \deg(t)$\;
	$M_0\gets\frac{\ln(1/\delta)\cdot 3}{\varepsilon ^2\cdot \gamma}$, $X\gets 0$\;
	\For{$i=1,\dots, M_0$}{
		Perform a random walk from $s$, and stop the walk
		\begin{enumerate}
			\item\label{cond:good-2} if the walk visits $t$ for the first time using the edge $(s,t)$ %
			\item or if the walk visits $t$ for the first time using any other edge.
		\end{enumerate}
		If the item (\ref{cond:good-2}) occurs, $X\gets X+1$\;
	}
	\Return{$\frac{X}{M_0}$}
\end{algorithm}

\begin{theorem}
	\label{thm:esteff-algo-III}
	For $R_G(s,t)>\gamma$, Algorithm~\ref{alg:esteff-algo-III} (i.e., \textsc{EstEff-MC2}) returns with probability (1-$\delta$) a $(1+\varepsilon)$-approximation of $R_G(s,t)$.
\end{theorem}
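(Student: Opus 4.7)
The plan is to reduce the theorem to a straightforward concentration argument, since the cited lemma of~\cite{Shayan_recentadvances} directly interprets $R_G(s,t)$ as the probability of an observable random walk event. Specifically, for each iteration $i\in\{1,\ldots,M_0\}$, I would let $X_i$ be the indicator that the $i$-th random walk from $s$ visits $t$ for the first time via the edge $(s,t)$. By the lemma, $\mathbb{E}[X_i]=\Pr[X_i=1]=R_G(s,t)$, and the $X_i$ are mutually independent. Thus $X=\sum_{i=1}^{M_0}X_i$ is a sum of i.i.d.\ Bernoulli$(R_G(s,t))$ variables, so $\mathbb{E}[X]=M_0\cdot R_G(s,t)$ and the estimator $X/M_0$ returned by the algorithm is unbiased.

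Next I would apply the standard multiplicative Chernoff bound: for $\varepsilon\in(0,1)$,
\[
\Pr\bigl[\,|X-\mathbb{E}[X]|>\varepsilon\,\mathbb{E}[X]\,\bigr]\;\le\;2\exp\!\left(-\frac{\varepsilon^{2}\,\mathbb{E}[X]}{3}\right).
\]
Using the hypothesis $R_G(s,t)>\gamma$, we have $\mathbb{E}[X]=M_0 R_G(s,t)>M_0\gamma=\frac{3\ln(1/\delta)}{\varepsilon^{2}}$, so the right-hand side is at most $2\delta$ (which becomes $\delta$ after absorbing the factor of $2$ into a constant in $M_0$, e.g.\ replacing $\ln(1/\delta)$ with $\ln(2/\delta)$; this change is hidden inside the constant in the statement). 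Rearranging the event $|X-\mathbb{E}[X]|\le \varepsilon\,\mathbb{E}[X]$ and dividing by $M_0$ gives $(1-\varepsilon)R_G(s,t)\le X/M_0\le (1+\varepsilon)R_G(s,t)$ with probability at least $1-\delta$, which is precisely the $(1\pm\varepsilon)$-approximation claimed.

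\textbf{Main obstacles.} There is essentially no technical obstacle beyond Chernoff: the independence of the $X_i$'s is immediate from the fact that the algorithm launches fresh random walks, and the identification of each $X_i$ with a Bernoulli trial of parameter $R_G(s,t)$ is handed to us by the probabilistic interpretation of effective resistance on an edge. The only delicate point worth double-checking is that the two stopping conditions in the loop really do partition the sample space, i.e.\ that the walk almost surely eventually hits $t$ for the first time via some edge; this follows from the graph being connected (implicit in the setup) together with recurrence of the simple random walk on a finite connected graph, so each walk terminates with probability $1$ and $X_i$ is well defined. A minor bookkeeping issue is that the standard Chernoff bound produces an extra factor of $2$, which I would absorb into the constant defining $M_0$ rather than restating the theorem.
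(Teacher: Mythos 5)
Your proposal is correct and follows essentially the same route as the paper's proof: the same Bernoulli indicators $X_i$ with success probability $R_G(s,t)$, the same multiplicative Chernoff bound using $\mathbb{E}[X]=M_0R_G(s,t)>M_0\gamma$, and the same handling of the factor of $2$ (the paper sets $\delta'=\delta/2$, which is exactly your replacement of $\ln(1/\delta)$ by $\ln(2/\delta)$ inside $M_0$). Your remark that the walk terminates almost surely so the $X_i$ are well defined is a small point the paper leaves implicit.
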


The proof of the above theorem is deferred to Appendix \ref{sec:app}. Note that in contrast to Algorithm~\ref{alg:esteff-algo-II}, the random walks in Algorithm~\ref{alg:esteff-algo-III} stop as soon as we have reached the destination vertex $t$. Hence, one can expect that Algorithm~\ref{alg:esteff-algo-III} runs faster than Algorithm~\ref{alg:esteff-algo-II}. Experimental comparisons of the running times can be found in Section~\ref{sec:runtime}.

\subsection{An algorithm based on estimating the number of spanning trees}\label{sec:spanningtree}
Now we present a local algorithm based on a connection to the number of spanning trees of a graph. Let $T(G)$ denote the number of spanning trees of $G$.
\begin{lemma}[Corollary 4.2 in~\cite{lovasz1993random}]\label{lemma:eff_spantree}
	Let $G$ be a graph and $s,t\in V$. Let $G'$ be the graph obtained by identifying $s$ and $t$. Then
	$$R_G(s,t)=\frac{T(G')}{T(G)}$$
\end{lemma}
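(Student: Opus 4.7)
The plan is to establish this identity via the (generalized) Matrix-Tree Theorem, expressing both sides as ratios involving principal minors of $\mat{L}$. More precisely, I would show that $R_G(s,t) \cdot T(G)$ and $T(G')$ each count the number of spanning $2$-forests of $G$ in which $s$ and $t$ lie in different components.

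First, I would express $R_G(s,t)$ as a ratio of Laplacian minors. Let $\mat{L}_t$ denote the $(n-1)\times(n-1)$ principal submatrix of $\mat{L}$ obtained by deleting the row and column indexed by $t$, and let $\mat{L}_{s,t}$ denote the $(n-2)\times(n-2)$ submatrix obtained by deleting the rows and columns indexed by both $s$ and $t$. Since $G$ is connected, $\mat{L}_t$ is nonsingular, and Kirchhoff's classical Matrix-Tree Theorem gives $\det(\mat{L}_t) = T(G)$. By definition of effective resistance, any $\phi \in \R^V$ satisfying $\mat{L}\phi^\top = \chi_{s,t}^\top$ obeys $R_G(s,t) = \phi_s - \phi_t$; fixing $\phi_t = 0$ reduces the computation to solving $\mat{L}_t x = (\1_s)_{-t}^\top$ and reading off $x_s$. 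By Cramer's rule, $x_s = \det(M)/\det(\mat{L}_t)$, where $M$ is obtained from $\mat{L}_t$ by replacing column $s$ with the appropriate standard basis vector; expanding $\det(M)$ along this column gives $\det(M) = \det(\mat{L}_{s,t})$, yielding
\[
R_G(s,t) \;=\; \frac{\det(\mat{L}_{s,t})}{T(G)}.
\]

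Second, I would identify $\det(\mat{L}_{s,t})$ combinatorially via the all-minors version of the Matrix-Tree Theorem (Chaiken), which states that $\det(\mat{L}_{s,t})$ equals the number of spanning forests of $G$ with exactly two trees, one containing $s$ and the other containing $t$. A direct bijection then finishes the proof: given a spanning tree $T'$ of $G'$, lift each edge of $T'$ back to its preimage edge in $G$ to produce a subgraph $F \subseteq G$ with $|V|-2$ edges. The lift $F$ is acyclic, because any cycle in $F$ would project to a cycle in $T'$ passing through the merged vertex, which is impossible in a tree. Hence $F$ is a spanning forest with two components, and since $s,t$ are distinct in $G$ but identified in $T'$, they must lie in different components of $F$. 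The inverse map contracts $s$ and $t$, so $T(G') = \det(\mat{L}_{s,t})$, and combining with the displayed identity gives $R_G(s,t) = T(G')/T(G)$.

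The main obstacle is the careful handling of the identification operation when $G$ already contains the edge $(s,t)$ or when $s$ and $t$ share common neighbors: in both cases $G'$ becomes a multigraph, possibly with a self-loop arising from $(s,t)$. One must work consistently with the multigraph Laplacian of $G'$, where parallel edges sum into the Laplacian entries and self-loops contribute nothing (they cannot appear in any spanning tree). Once this bookkeeping is in place, the bijection and the all-minors formula go through verbatim.
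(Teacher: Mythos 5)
Your proof is correct, but note that the paper does not prove this lemma at all: it is imported verbatim as Corollary 4.2 of Lov\'asz's survey, so there is no internal argument to compare against. What you give is the standard self-contained derivation: grounding the potential at $t$, applying Cramer's rule to get $R_G(s,t)=\det(\mat{L}_{s,t})/\det(\mat{L}_t)$, identifying $\det(\mat{L}_t)=T(G)$ by Kirchhoff and $\det(\mat{L}_{s,t})$ with the number of $\{s,t\}$-separating spanning $2$-forests by the all-minors theorem, and finishing with the contraction bijection onto spanning trees of $G'$. All the steps check out, including the sign in the cofactor expansion (the replaced column's single nonzero entry sits in the $(s,s)$ position, so the cofactor is $+\det(\mat{L}_{s,t})$) and your handling of the multigraph issues in $G'$: the self-loop created from an edge $(s,t)$ correctly corresponds to no $2$-forest (such an edge would join the two components), and parallel edges to $w$ correspond to the distinct choices of attaching a common neighbor to $s$ or to $t$. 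The one implicit hypothesis worth stating is that $G$ is connected (otherwise $\mat{L}_t$ is singular and $T(G)=0$), which the paper supplies elsewhere via Assumption~\ref{assume:expander}.
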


Lyons and Oveis Gharan gave a local algorithm for estimating the number of spanning trees~\cite{lyons2017sharp}. %

\begin{lemma}[Corollary 1.2 in~\cite{lyons2017sharp}]\label{lemma:local_spantree}
	Let $G=(V,E)$ be a graph. In the adjacent list model, together with knowledge of $n$ and $|E|$, there exists a randomized algorithm that for any given $\varepsilon,\delta>0$, outputs an estimate $Z$ that approximates $\frac{\log T(G)}{|V|}$ within an additive error of $\varepsilon$, i.e.,
$	\lrabs{\frac{\log T(G)}{|V|} - Z}\leq \varepsilon$
	with probability at least $1-\delta$, by using only $\tilde{O}(\varepsilon^{-5}+\varepsilon^{-2}\log^2n)\log{\delta^{-1}}$ number of queries.
\end{lemma}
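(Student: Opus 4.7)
The plan is to realize $\log T(G)/|V|$ as the expectation of a locally computable random variable and then estimate that expectation by sampling vertices and performing short random walks from each. For a connected graph $G$, Kirchhoff's matrix-tree theorem gives
\begin{equation*}
\log T(G) = \sum_{i=2}^{n}\log \lambda_i(\mat{L}) - \log n,
\end{equation*}
where $0 = \lambda_1 < \lambda_2 \le \dots \le \lambda_n$ are the eigenvalues of the combinatorial Laplacian. Moving to the symmetric normalized matrix $\mat{N}=\mat{D}^{-1/2}\mat{A}\mat{D}^{-1/2}$ with spectrum $1 = \mu_1 > \mu_2 \ge \dots \ge \mu_n \ge -1$ and separating the degree contribution rewrites this as
\begin{equation*}
\frac{\log T(G)}{|V|} = \frac{1}{|V|}\sum_{v\in V}\log \deg(v) + \frac{1}{|V|}\sum_{i=2}^{n}\log(1-\mu_i) - \frac{\log n}{|V|}.
\end{equation*}
Expanding $\log(1-x) = -\sum_{k\ge 1}x^k/k$ and using $\sum_{i=1}^n \mu_i^k = \Tr(\mat{P}^k)$, the spectral sum becomes a power series in which the $k$-th term involves $\Tr(\mat{P}^k)/|V|$; the latter equals the expected length-$k$ return probability of a simple random walk started at a uniformly random vertex, which is exactly the kind of quantity a local algorithm can estimate.

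With this reformulation in hand, the algorithm I would implement samples $N = \tilde{O}(\varepsilon^{-2})\log\delta^{-1}$ vertices $v_1,\dots,v_N$ uniformly, and for each sampled $v_j$ and each walk length $k \le K$ (where $K$ is a truncation threshold) performs $r_k$ independent length-$k$ random walks from $v_j$, recording the empirical return fraction $\hat p_k(v_j)$. The output is
\begin{equation*}
Z = \frac{1}{N}\sum_{j=1}^N \log\deg(v_j) - \sum_{k=1}^K \frac{1}{k}\Bigl(\frac{1}{N}\sum_{j=1}^N \hat p_k(v_j) - \frac{1}{|V|}\Bigr) - \frac{\log n}{|V|}.
\end{equation*}
The error decomposes into a deterministic tail from discarding $k > K$, a per-vertex Monte Carlo error in estimating each return probability, and the outer sampling error in replacing the population mean over $V$ by the empirical mean over the $v_j$. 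Hoeffding bounds for the latter two, combined with the choices $K = \tilde{O}(\log^2 n/\varepsilon)$ and $r_k = \tilde{O}(k^2/\varepsilon^3)$, yield the total query complexity $\tilde{O}(\varepsilon^{-5}+\varepsilon^{-2}\log^2 n)\log\delta^{-1}$ claimed by the lemma.

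The main obstacle is bounding the truncation tail $\sum_{k>K}(\Tr(\mat{P}^k)-1)/(k|V|)$ on graphs whose spectral gap $1-|\mu_2|$ may be arbitrarily small; a naive bound via $|\mu_i|^k/(k(1-|\mu_i|))$ diverges as $\mu_i\to 1$. The rescuing observation is that although a single eigenvalue close to $1$ can contribute $\Theta(1)$ to every $\Tr(\mat{P}^k)$, it contributes only $O(\log n)$ to $\log T(G)$ in total, hence only $O(\log n/|V|) = o_n(1)$ to the target quantity, which is absorbed into the additive error $\varepsilon$ for sufficiently large $n$. Concretely I would partition the spectrum into a bulk part with $|\mu_i| \le 1 - \log n / K$ handled by the Taylor-truncated power series, and a small near-one part controlled separately via the crude lower bound $\lambda_i(\mat{L}) \ge 1/\poly(n)$ valid for connected graphs. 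Showing that this spectral split is compatible with the purely local random-walk estimator, without the algorithm ever needing to identify which eigenvalues lie near $1$, is the delicate core of the analysis.
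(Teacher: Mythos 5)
The paper does not actually prove this lemma: it is imported verbatim as Corollary~1.2 of Lyons and Oveis Gharan, with the underlying procedure reproduced as Algorithm~\ref{alg:esteff-algo-I}. Your reconstruction follows the same skeleton as that work (matrix-tree theorem, log-determinant of the normalized Laplacian, Taylor expansion of $\log(1-x)$ turning the spectral sum into a series in $\Tr(\mat{P}^k)/n$, i.e., average return probabilities, estimated by sampling vertices and running walks), so the approach is the right one. But the step you yourself flag as the ``delicate core'' is handled by an argument that is false. You claim that the part of the spectrum near $1$ is small and contributes only $O(\log n/|V|)$, absorbed into $\varepsilon$. On a path or cycle on $n$ vertices, $\Theta(n)$ eigenvalues of $\mat{P}$ lie within $O(1/K)$ of $1$ for any truncation $K=\poly(\log n/\varepsilon)$, so the near-one part contributes $\Theta(1)$ (and up to $\Theta(\log n)$) to $\frac{1}{n}\sum_{i\ge 2}\log(1-\mu_i)$; a crude bound $\lambda_i(\mat{L})\ge 1/\poly(n)$ together with a spectral split cannot rescue this. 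The actual content of the cited paper is a universal decay bound on average return probabilities of the \emph{lazy} walk on an arbitrary connected graph (of the form $\frac{1}{n}(\Tr(\mat{P}^k)-1)=O(k^{-c})$ for an absolute constant $c>0$, proved via a spectral-embedding argument); this is what permits truncation at $K=O(\varepsilon^{-3})$ and is the source of the $\varepsilon^{-5}$ term in the query bound. Relatedly, you should work with the lazy walk (nonnegative spectrum), which is why Algorithm~\ref{alg:esteff-algo-I} carries the $\log(2\deg(v))$ and $\log(4|E|)$ corrections, and the degree-separation identity should yield a $-\log(2|E|)/n$ term rather than $-\log n/n$.

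There is also a quantitative gap in your estimator. Running $r_k=\tilde{O}(k^2/\varepsilon^3)$ walks of every length $k\le K$ from each of $N=\tilde{O}(\varepsilon^{-2})\log\delta^{-1}$ sampled vertices costs $N\sum_{k\le K}k\,r_k$ steps, well above the claimed bound once $K$ and $r_k$ are as you set them. The cited algorithm instead performs a single walk per sample, drawing the length $t$ with probability proportional to $1/t$ so that the $1/k$ weights are folded into the sampling distribution, and records a single return indicator; the variance is then controlled by $s^2=O(\log^2 r)$ and the expected total number of queries is $\tilde{O}(\varepsilon^{-5}+\varepsilon^{-2}\log^2 n)\log\delta^{-1}$ as stated. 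So your plan identifies the correct reduction but neither of its two load-bearing steps --- the tail bound and the sampling scheme --- goes through as written.
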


This suggests the following algorithm based on estimating the number of spanning trees. As remarked in~\cite{lyons2017sharp}, the assumption of having the knowledge of $|E|$ in Algorithm~\ref{alg:esteff-algo-Ib} might not even be necessary.
\begin{algorithm}
	\caption{\textsc{AppNumST}$(G,\varepsilon, \delta)$ [Algorithm 2 in~\cite{lyons2017sharp}]}\label{alg:esteff-algo-I}
	\DontPrintSemicolon
	$r \gets \lceil 90^3 {\varepsilon^{-3}}\rceil$\;
	$s \gets \sum_{1\leq t<2r} 1/t$\;
	$N \gets \lceil \frac{8\log (4/\delta)s^2}{\varepsilon ^2}\rceil $\;
	\For{$i=1 \gets N$}{
		Let x be a randomly chosen vertex of G.\;
		Sample $1\leq t<2r$ with probability $1/st$.\;
		Run a $t$-step lazy simple random walk from $x$, and let $Y_i\gets \mathbb{I}[X_t=x]$\;
	}
	Sample $\lceil 256 \log(1/\delta )(\log n)^2/\varepsilon^2\rceil$ random vertices of $G$, and let $\tilde{W}$ be the average of the logarithm of twice the degree of sampled vertices.\;
	\Return{$Z:=-n^{-1}\log(4|E|)+\tilde{W}-s(\sum_{i=1}^N Y_i)/N+s/n$}
\end{algorithm}

\begin{algorithm}
	\caption{\textsc{EstEff-SpanTree}$(G,\varepsilon, \delta, u, v)$}\label{alg:esteff-algo-Ib}
	\DontPrintSemicolon
	$a \gets$ \Call{AppNumST}{$G,\frac{\varepsilon}{2}, \frac{\delta}{2}$}\;
	$b \gets$ \Call{AppNumST}{$G,\frac{\varepsilon}{2}, \frac{\delta}{2}$}\;
	\Return{$\frac{e^{a(n-1)}}{e^{bn}}$}
\end{algorithm}

\begin{theorem}\label{alg:spanningtreelocal}
	Algorithm~\ref{alg:esteff-algo-Ib} returns with probability at least $1-\delta$ an estimator $X$ such that \[e^{-\varepsilon n} R_G(s,t)\leq X\leq e^{\varepsilon n} R_G(s,t).\] The algorithm uses $\tilde{O}(\varepsilon^{-5}+\varepsilon^{-2}\log^2n)\log{\delta^{-1}}$ queries.
\end{theorem}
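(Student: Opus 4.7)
The plan is to reduce the claim to two back-to-back invocations of the spanning tree estimator of Lyons and Oveis Gharan (Lemma~\ref{lemma:local_spantree}) together with the identity $R_G(s,t) = T(G')/T(G)$ (Lemma~\ref{lemma:eff_spantree}), where $G'$ is the $(n-1)$-vertex graph obtained by identifying $s$ and $t$. Interpreting the two \Call{AppNumST}{} calls as being applied to $G'$ and $G$ respectively (the first on the contracted graph, which has $n-1$ vertices, and the second on $G$), the output $X = e^{a(n-1)}/e^{bn}$ is an approximation to $e^{\log T(G') - \log T(G)} = R_G(s,t)$.

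First, I would instantiate Lemma~\ref{lemma:local_spantree} with parameters $(\varepsilon/2, \delta/2)$ for each call. This yields an event $\mathcal{E}_1$ on which
\[
\lrabs{a - \frac{\log T(G')}{n-1}} \leq \frac{\varepsilon}{2}
\]
with probability at least $1 - \delta/2$, and an event $\mathcal{E}_2$ on which the analogous bound $|b - \log T(G)/n| \leq \varepsilon/2$ holds with probability at least $1 - \delta/2$. A union bound guarantees that $\mathcal{E}_1 \cap \mathcal{E}_2$ occurs with probability at least $1 - \delta$.

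Next, I would convert these additive estimates on normalized logarithms into estimates of the raw logarithms by scaling: on $\mathcal{E}_1 \cap \mathcal{E}_2$,
\[
\bigl|a(n-1) - \log T(G')\bigr| \leq \tfrac{\varepsilon(n-1)}{2}, \qquad \bigl|bn - \log T(G)\bigr| \leq \tfrac{\varepsilon n}{2}.
\]
Subtracting and applying the triangle inequality then gives $|\log X - \log R_G(s,t)| \leq \varepsilon(n-1)/2 + \varepsilon n/2 \leq \varepsilon n$, and exponentiating yields the desired two-sided multiplicative bound $e^{-\varepsilon n} R_G(s,t) \leq X \leq e^{\varepsilon n} R_G(s,t)$. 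The query complexity follows by summing the two invocations: each uses $\tilde{O}((\varepsilon/2)^{-5} + (\varepsilon/2)^{-2}\log^2 n)\log(2/\delta)$ queries, which simplifies to the claimed $\tilde{O}(\varepsilon^{-5} + \varepsilon^{-2}\log^2 n)\log \delta^{-1}$.

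I expect no real technical obstacle; the argument is a routine composition of black-box guarantees. The only care-requiring point is bookkeeping around the domains of the two estimator calls (the contracted graph $G'$ has $n-1$ vertices, while $G$ has $n$), and confirming that the loss of a factor of $2$ in $\varepsilon$ and $\delta$ is absorbed into the final bound. Queryability of $G'$ in the adjacency-list model reduces trivially to that of $G$ by rerouting any neighbor or degree query involving $s$ or $t$ to the merged vertex, so no extra assumption is needed.
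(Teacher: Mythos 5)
Your proposal is correct and follows essentially the same route as the paper: invoke Lemma~\ref{lemma:local_spantree} with parameters $(\varepsilon/2,\delta/2)$ on $G'$ and on $G$, union bound, rescale the normalized-log guarantees by $n-1$ and $n$, and combine with Lemma~\ref{lemma:eff_spantree}. Your remarks on interpreting the first \Call{AppNumST}{} call as running on the contracted graph $G'$ and on simulating queries to $G'$ are a welcome bit of extra care that the paper's own proof leaves implicit.
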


We give the proof of Theorem \ref{alg:spanningtreelocal} in Appendix \ref{sec:app} and remark that the above algorithm seems of theoretical interest only, as it does not perform well in practice.

\section{Experiments}
In this section, we show our experimental results.
The experiments were conducted on a Linux server with Intel Xeon E5-2643 (3.4GHz) and 768GB of main memory, and all the programs were implemented in C++ and compiled with g++ 4.8.4.
The graphs used in the experiments are taken from SNAP\footnote{\url{https://snap.stanford.edu}} and basic information about the graphs is given in Table~\ref{tab:datasets}.
We generated query pairs by randomly sampling edges $1,000$ times with replacements.

\begin{table}[h!]
	\caption{Datasets}\label{tab:datasets}
	\begin{tabular}{lrr}
		\toprule
		& $n$ & $m$ \\
		\midrule
		Facebook & 4,039 & 88,233 \\
		DBLP & 317,080 & 1,049,869 \\
		YouTube & 1,134,891 & 2,987,627 \\
		\bottomrule
	\end{tabular}
\end{table}

\begin{figure*}[t!]
	\subfloat[Facebook]{\includegraphics[width=.33\hsize]{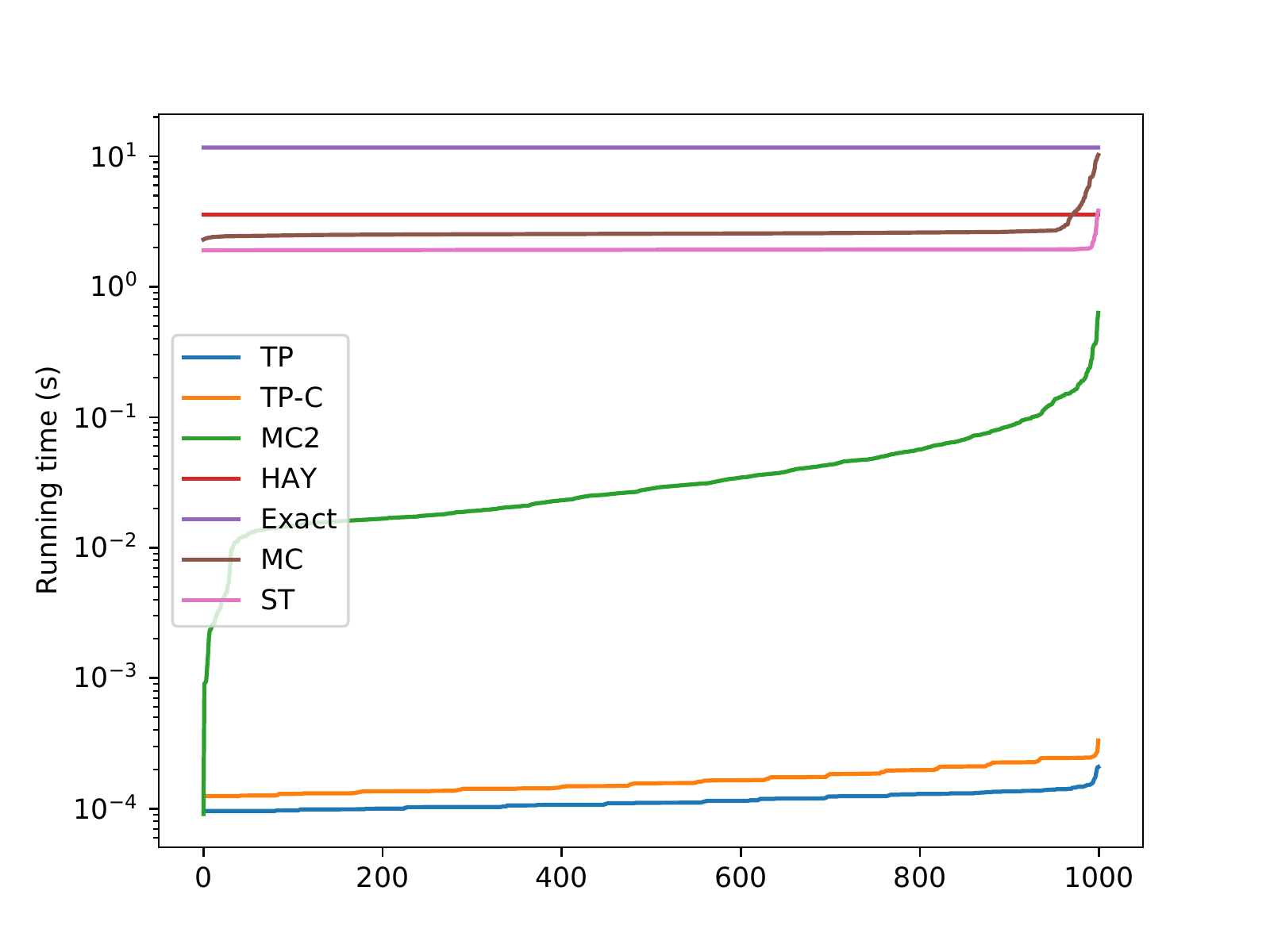}}
	\subfloat[DBLP]{\includegraphics[width=.33\hsize]{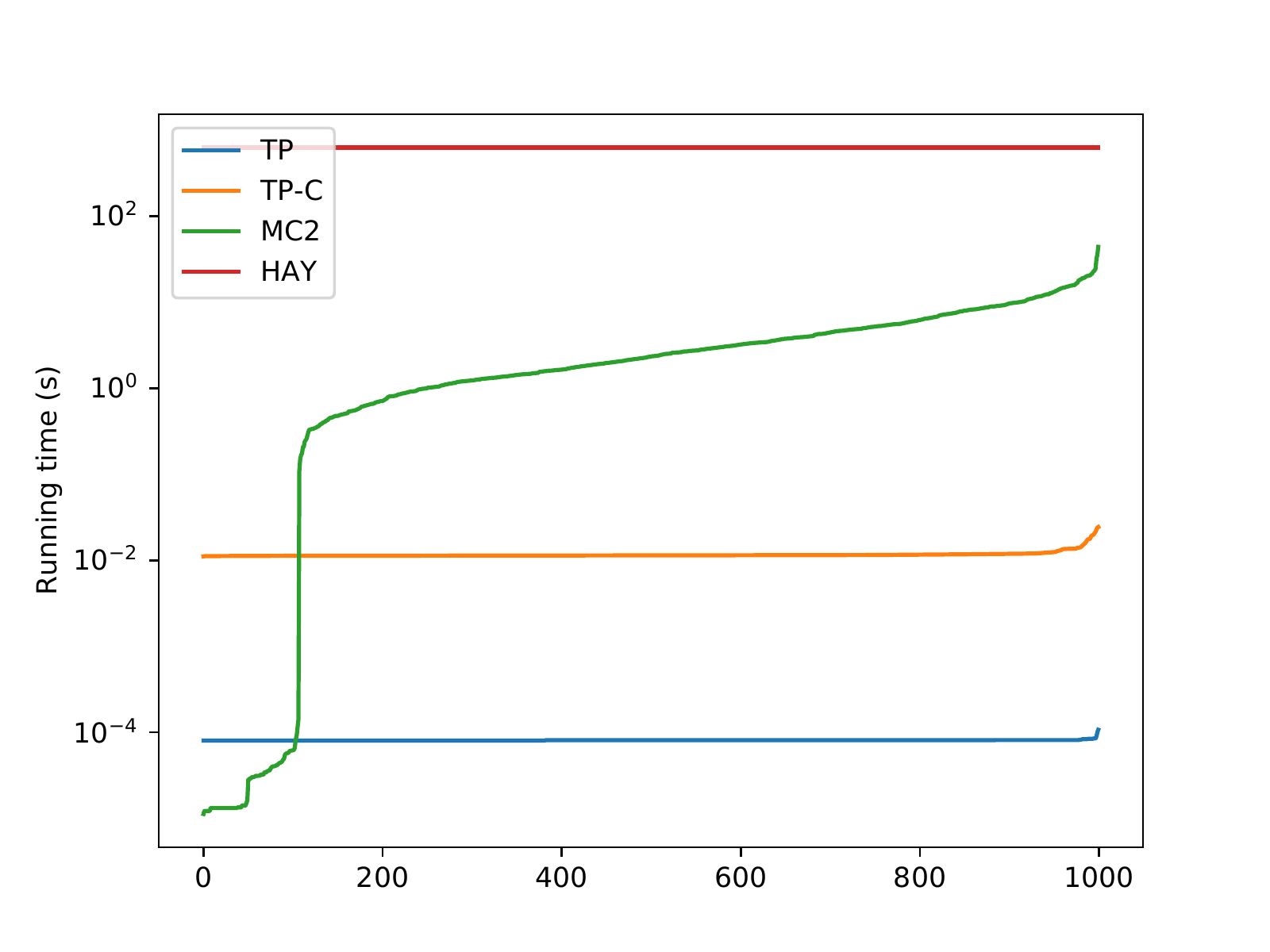}}
	\subfloat[YouTube]{\includegraphics[width=.33\hsize]{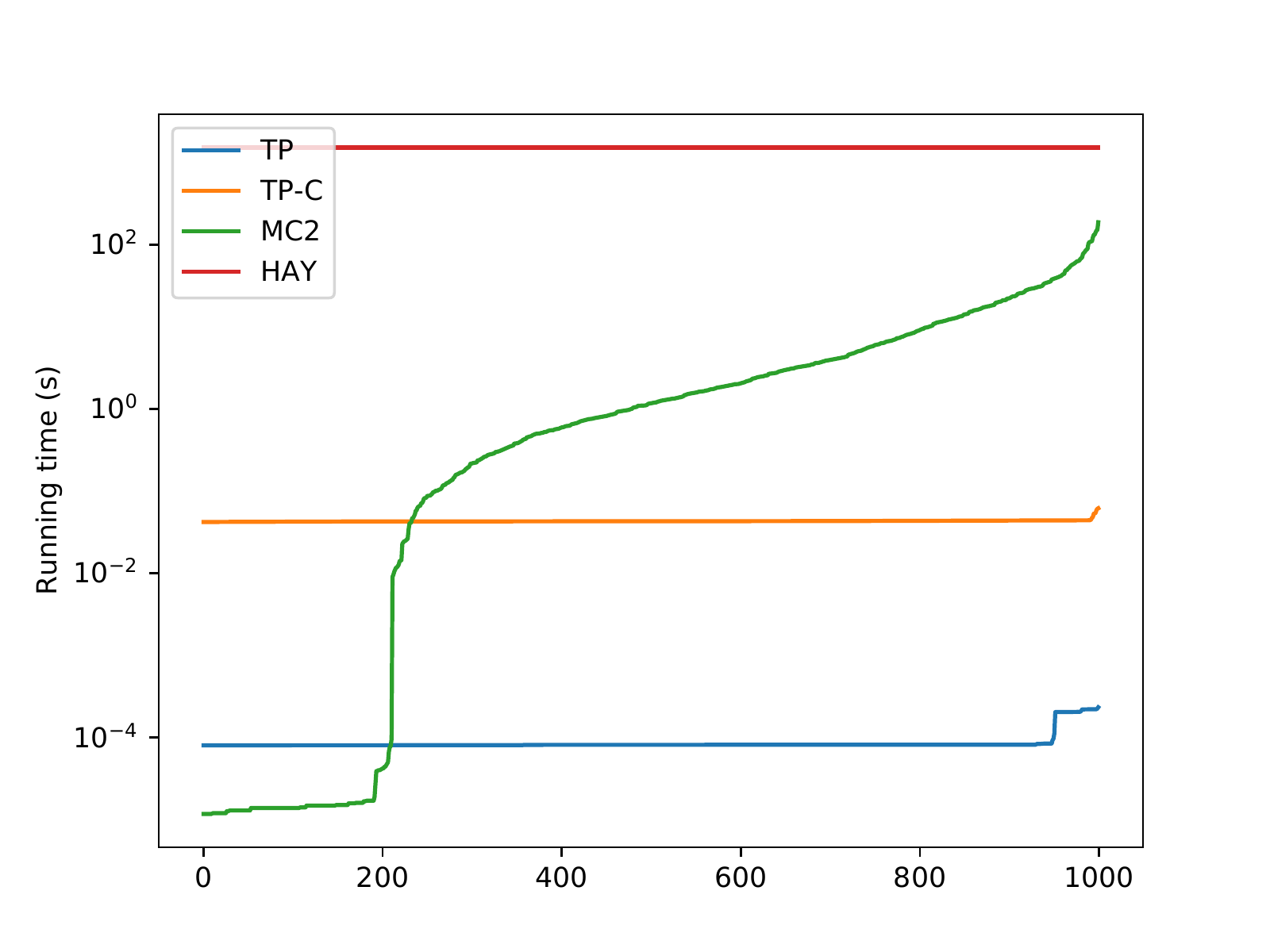}}
	\caption{Running time}\label{fig:query-time}
\end{figure*}

\begin{figure*}[t!]
	\subfloat[Facebook]{\includegraphics[width=.33\hsize]{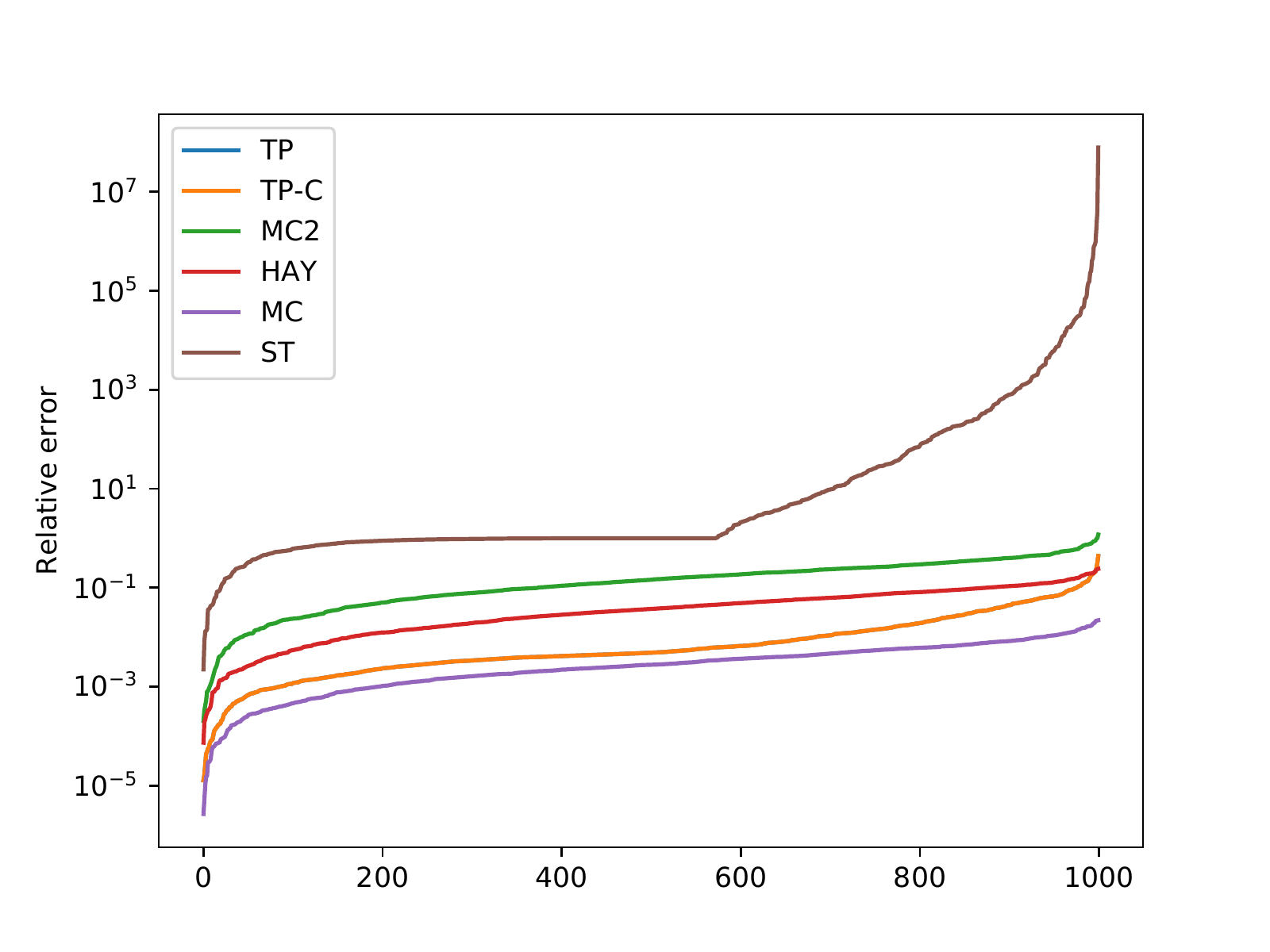}}
	\subfloat[DBLP]{\includegraphics[width=.33\hsize]{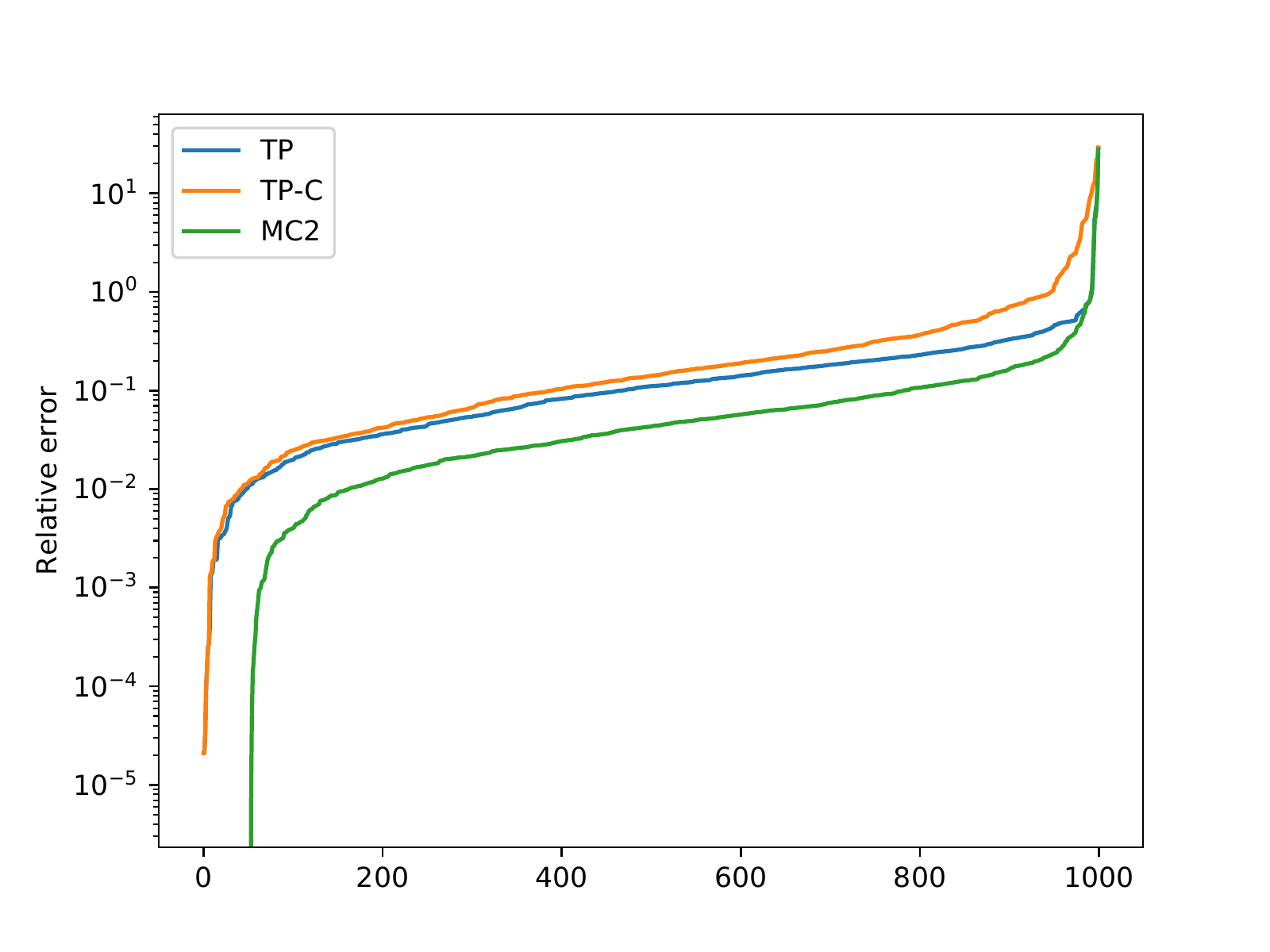}}
	\subfloat[YouTube]{\includegraphics[width=.33\hsize]{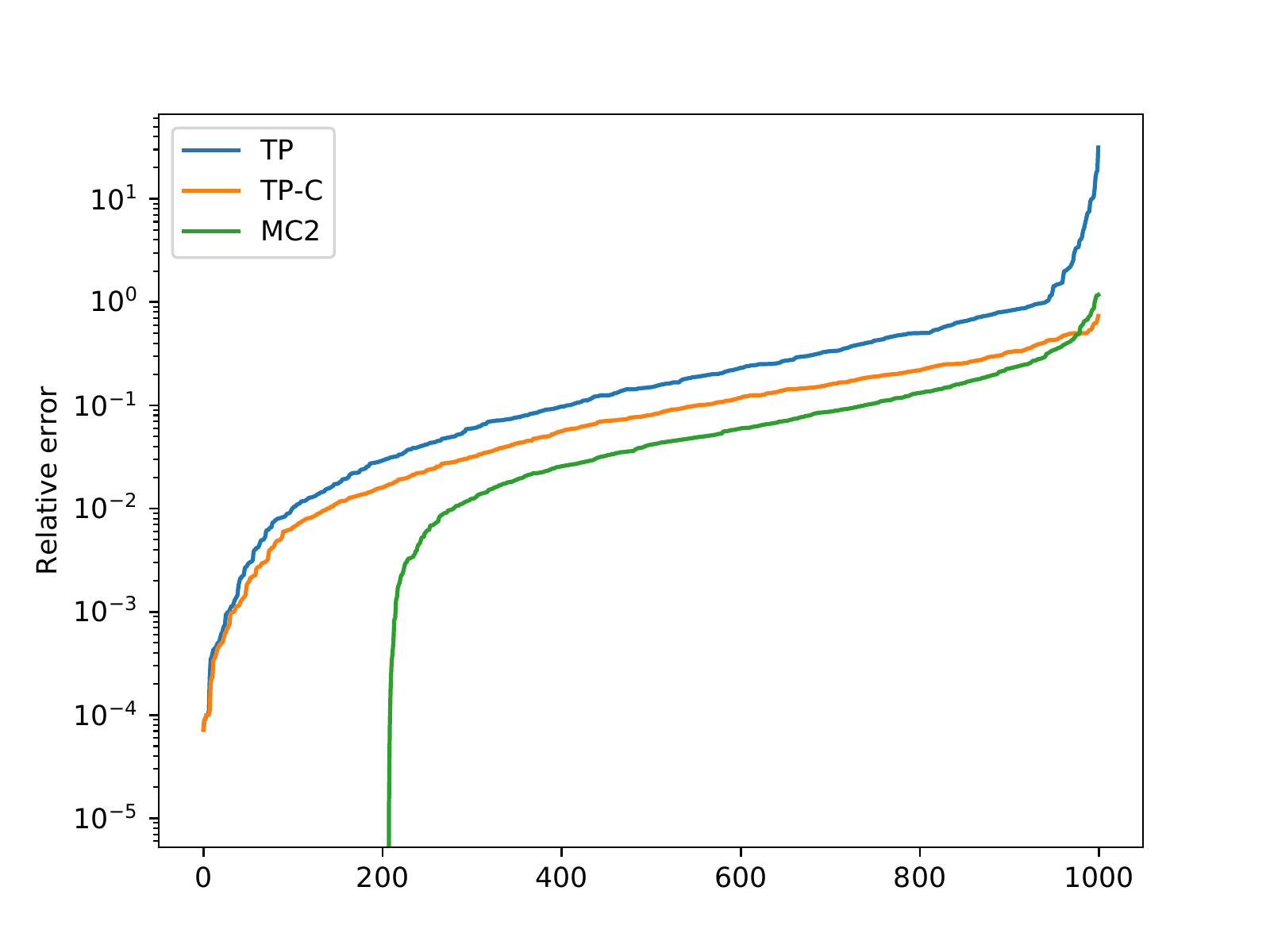}}
	\caption{Relative error}\label{fig:accuracy}
\end{figure*}

\begin{figure*}[t!]
	\begin{tabular}{lccc}
		& Facebook & DBLP & YouTube \\
		\Call{TP}{} & \includegraphics[align=c,width=.3\hsize]{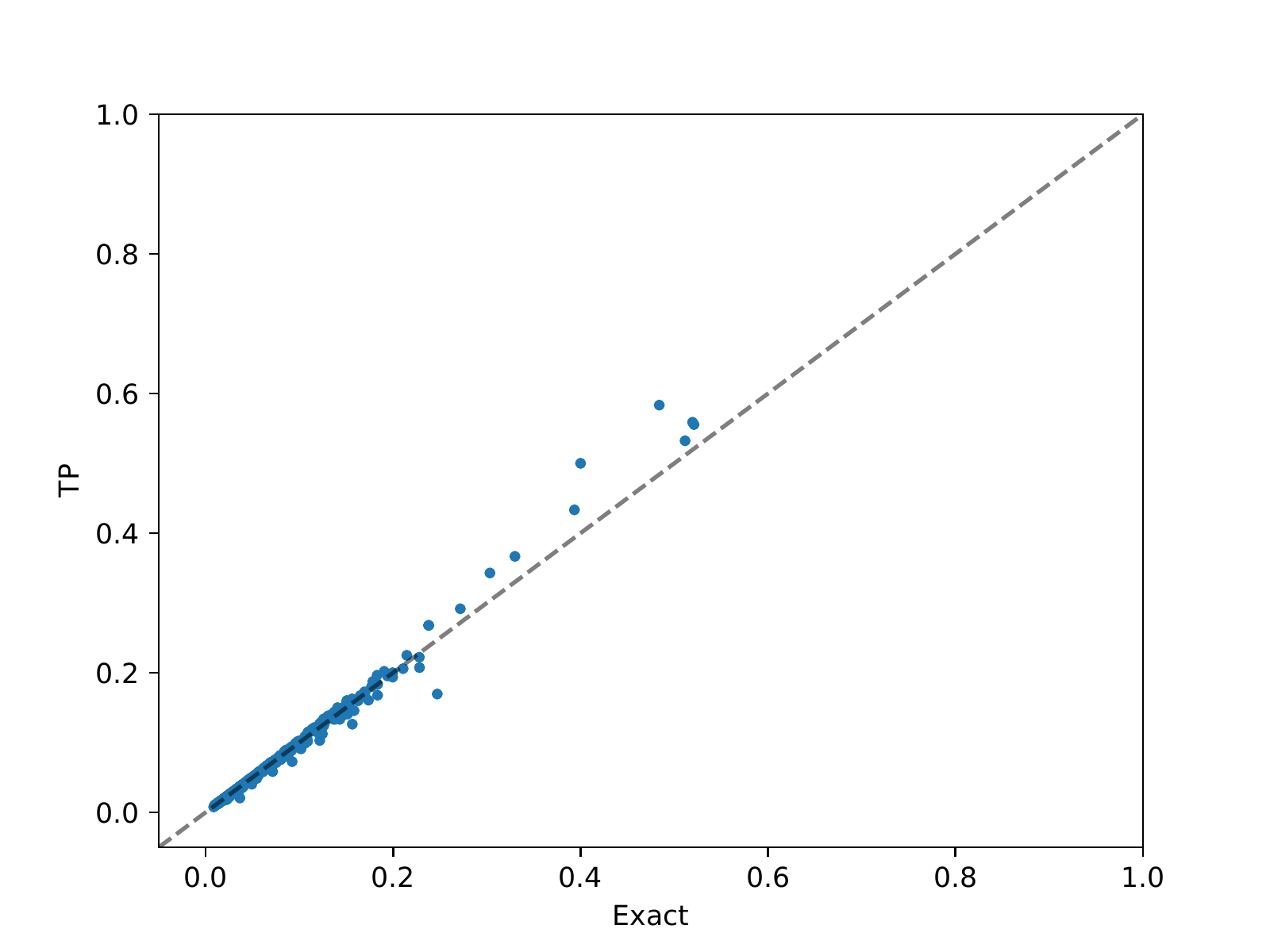} & \includegraphics[align=c,width=.3\hsize]{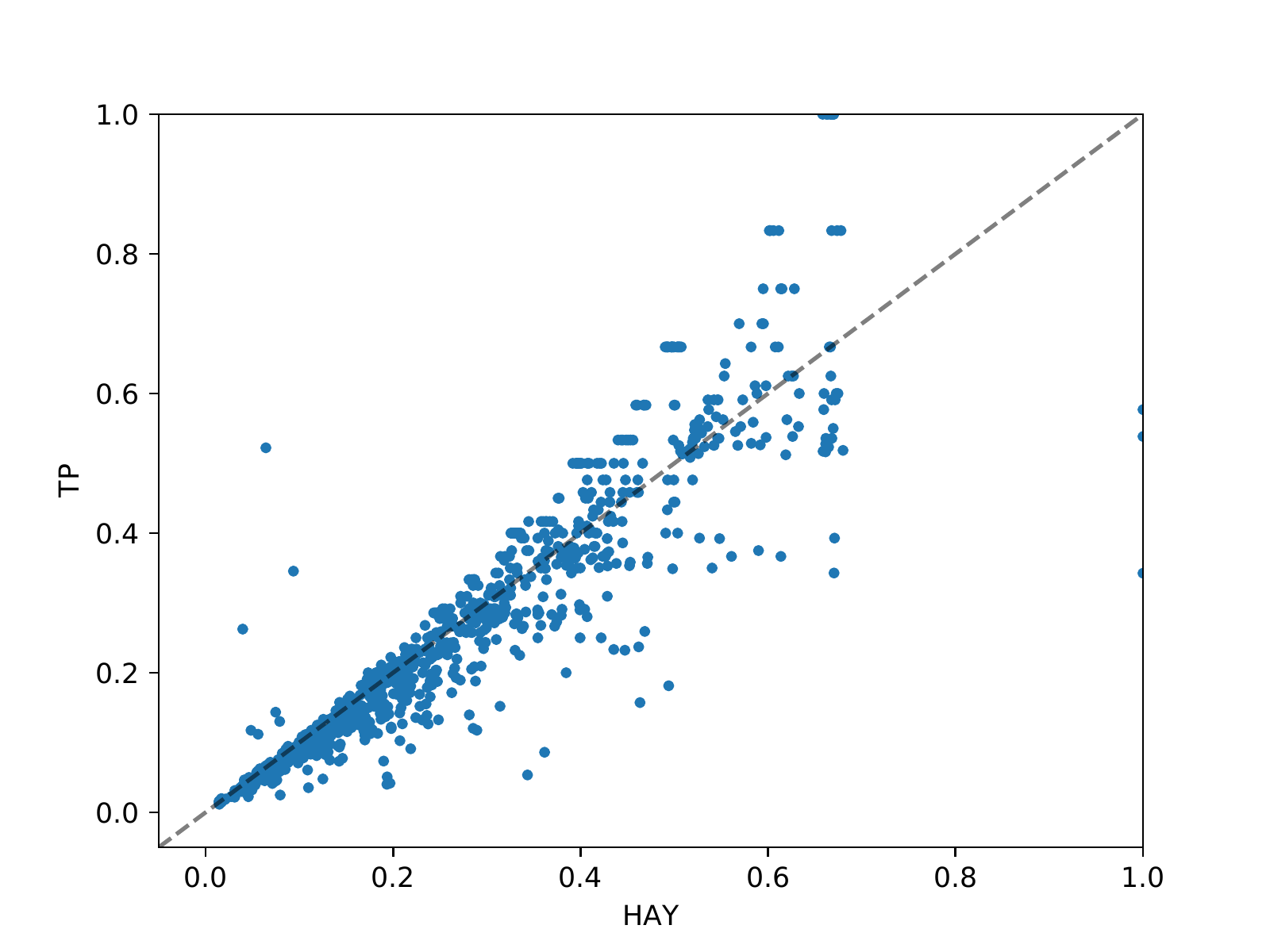} & \includegraphics[align=c,width=.3\hsize]{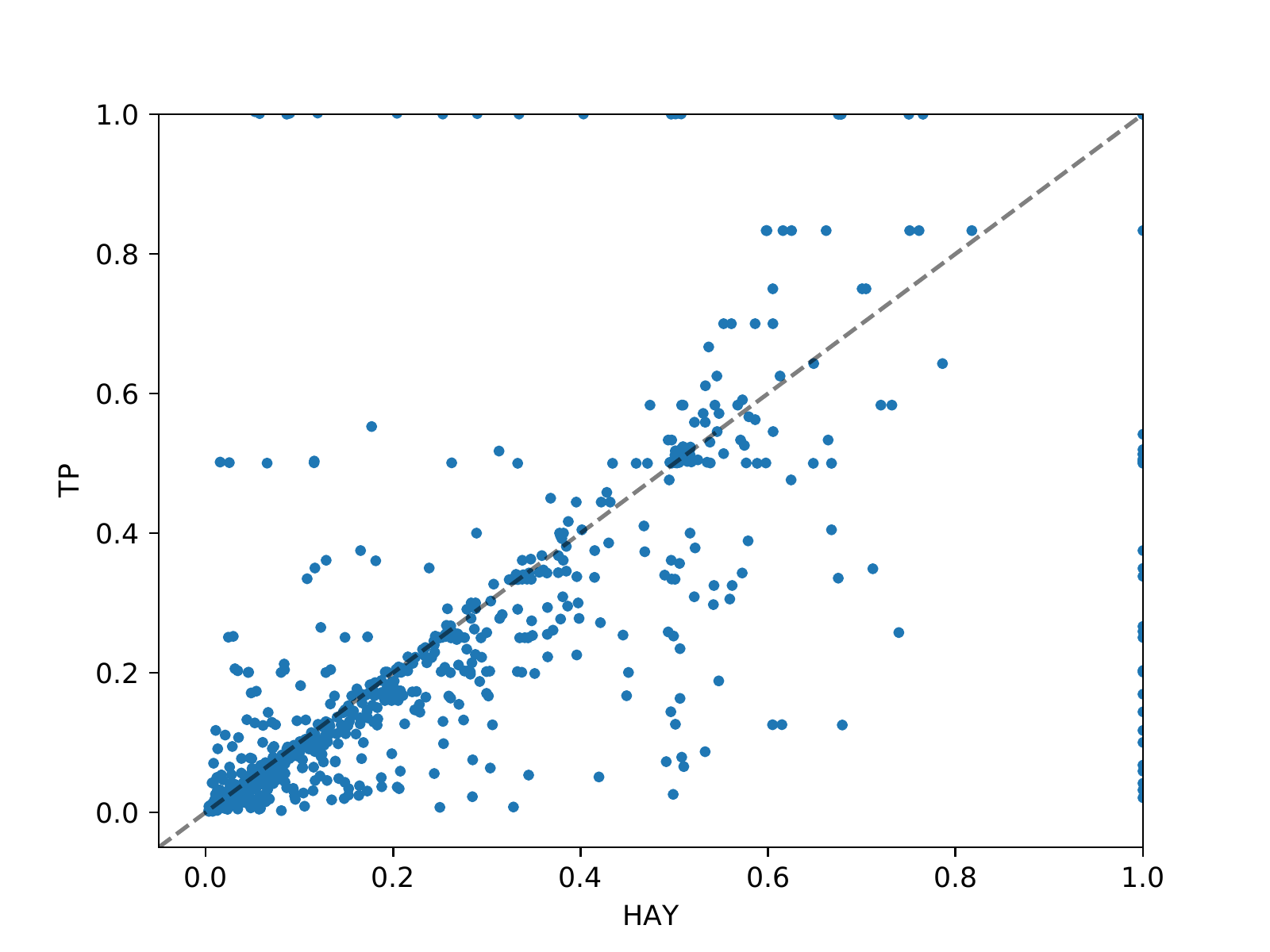} \\
		\Call{TP-C}{} & \includegraphics[align=c,width=.3\hsize]{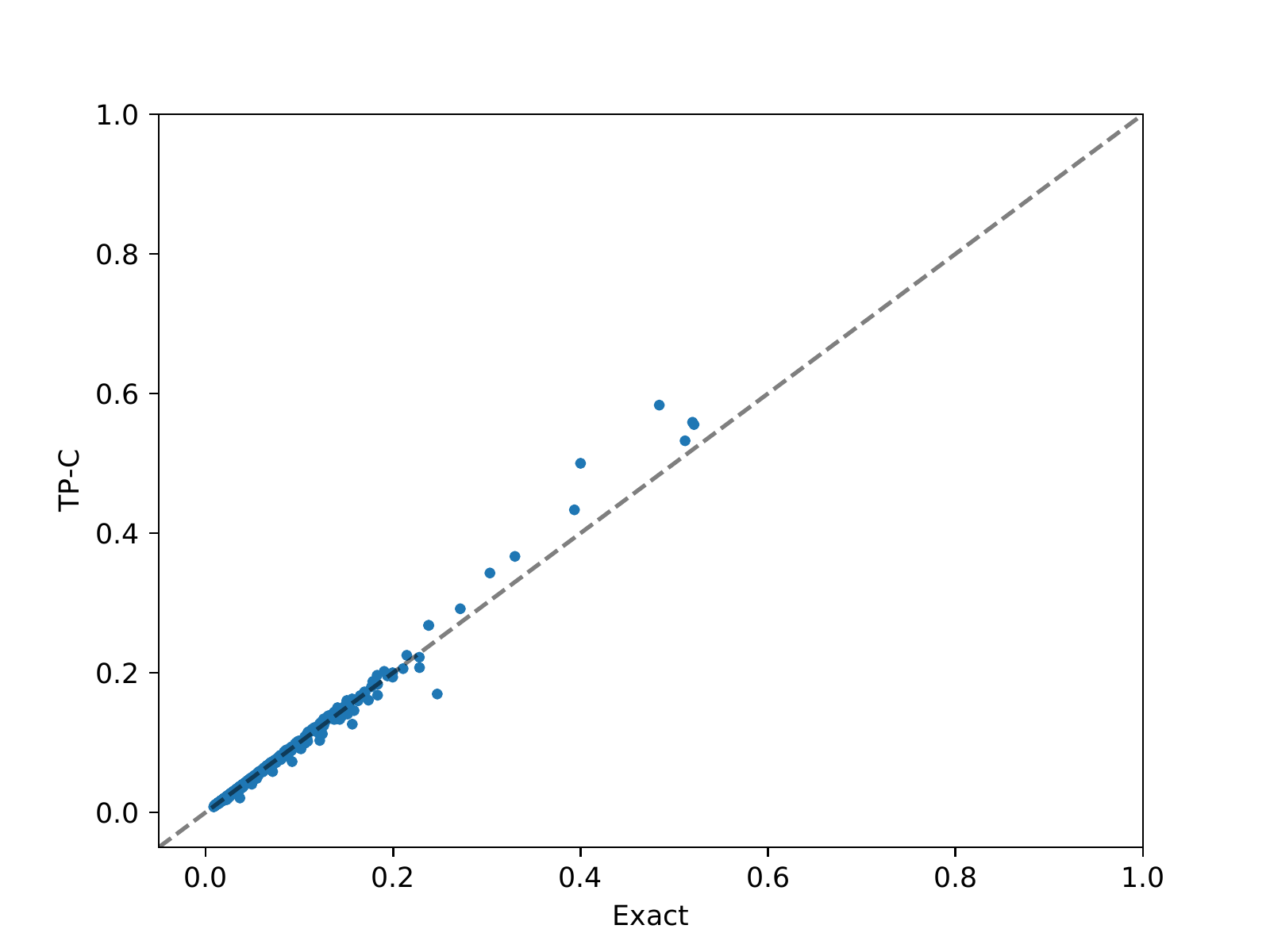} & \includegraphics[align=c,width=.3\hsize]{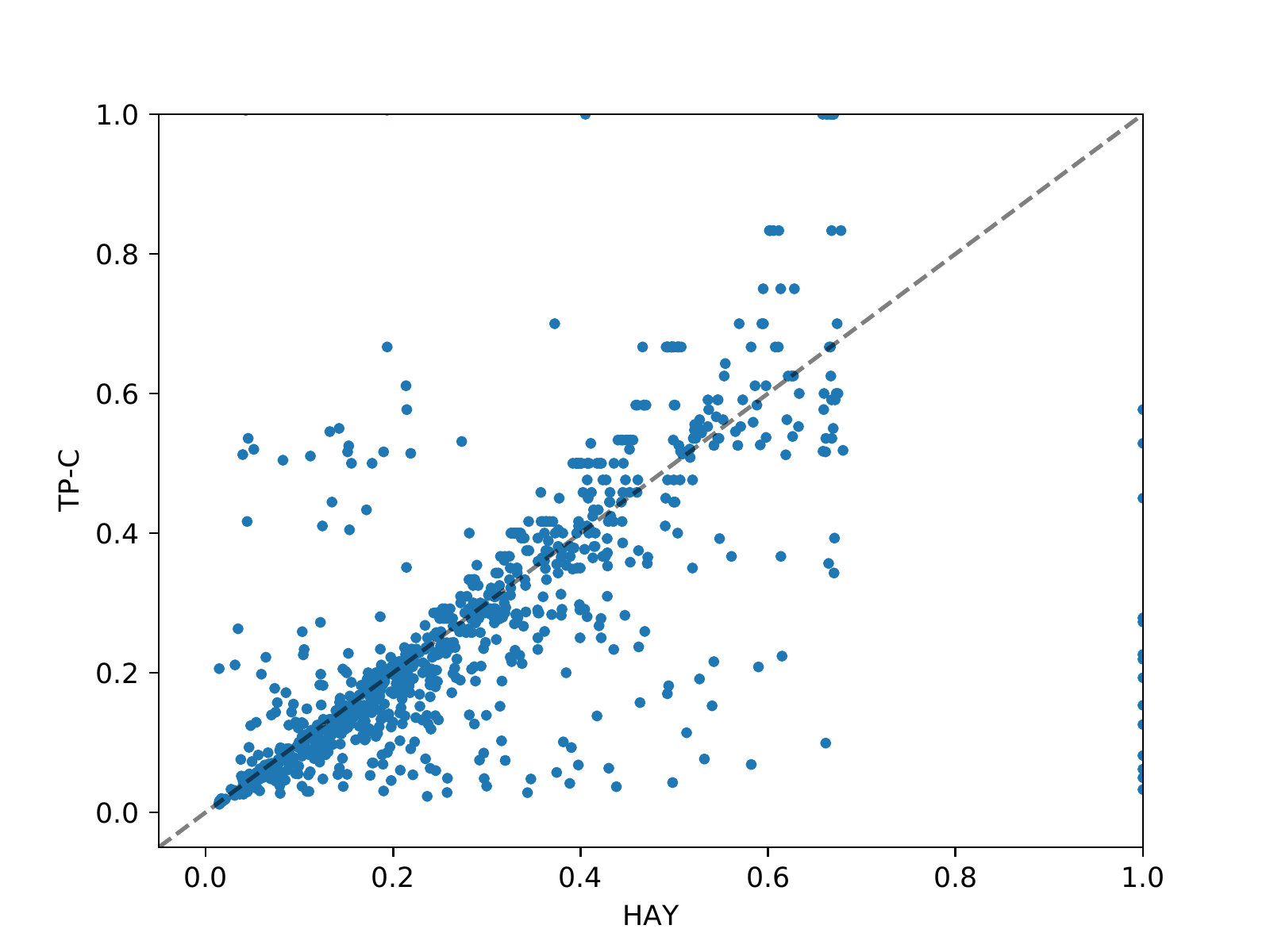} & \includegraphics[align=c,width=.3\hsize]{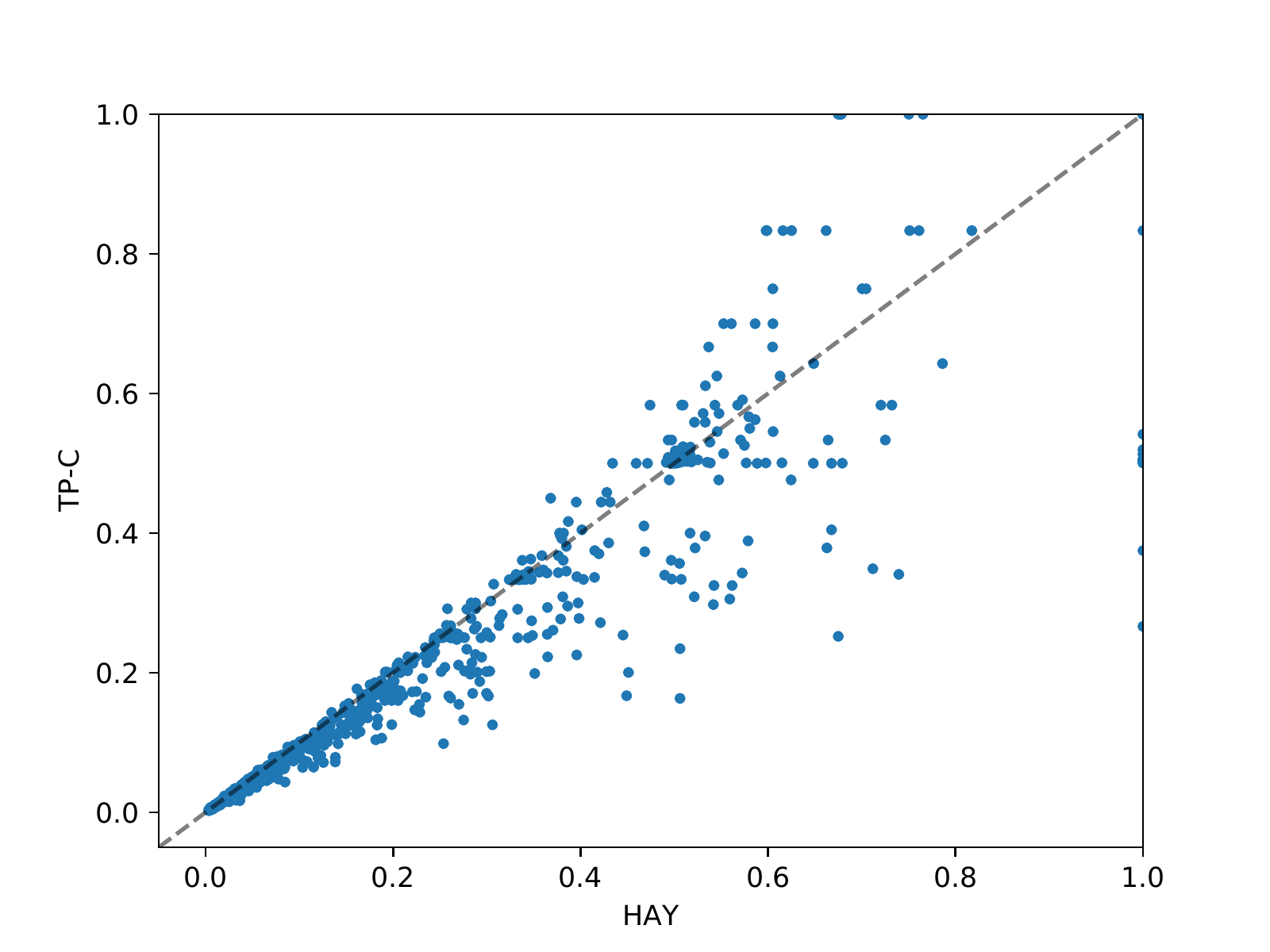} \\
		\Call{MC2}{} & \includegraphics[align=c,width=.3\hsize]{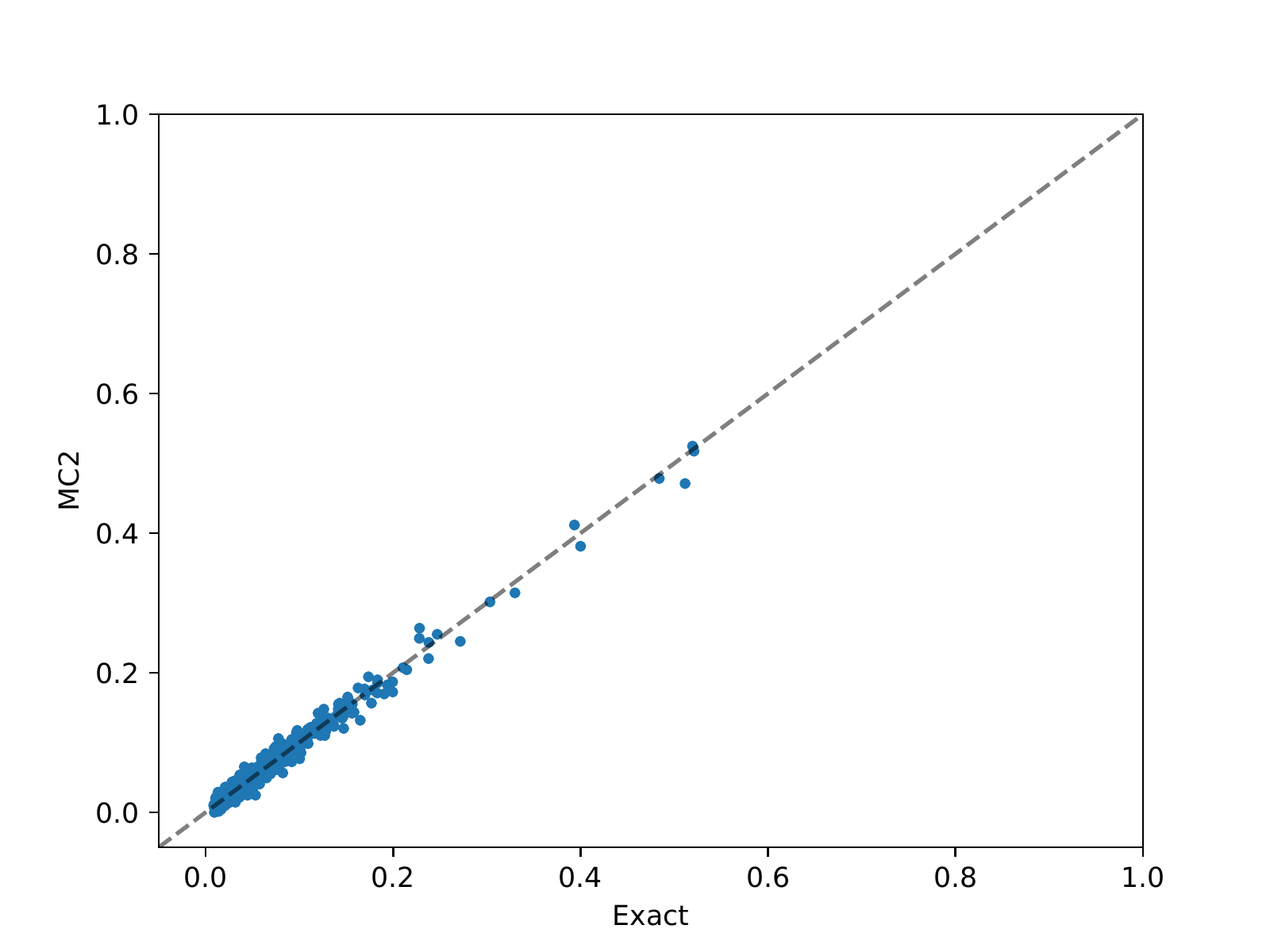} & \includegraphics[align=c,width=.3\hsize]{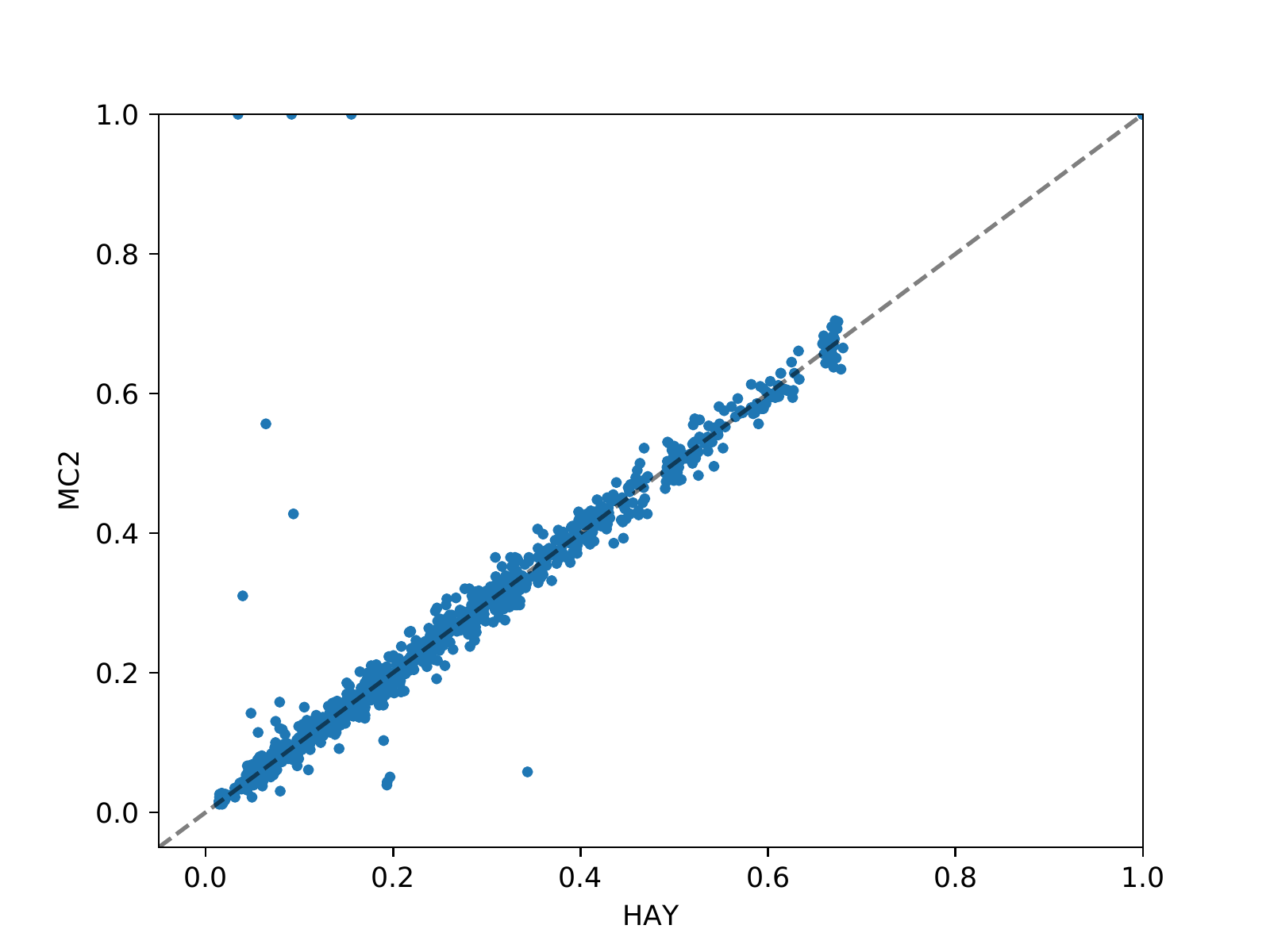} & \includegraphics[align=c,width=.3\hsize]{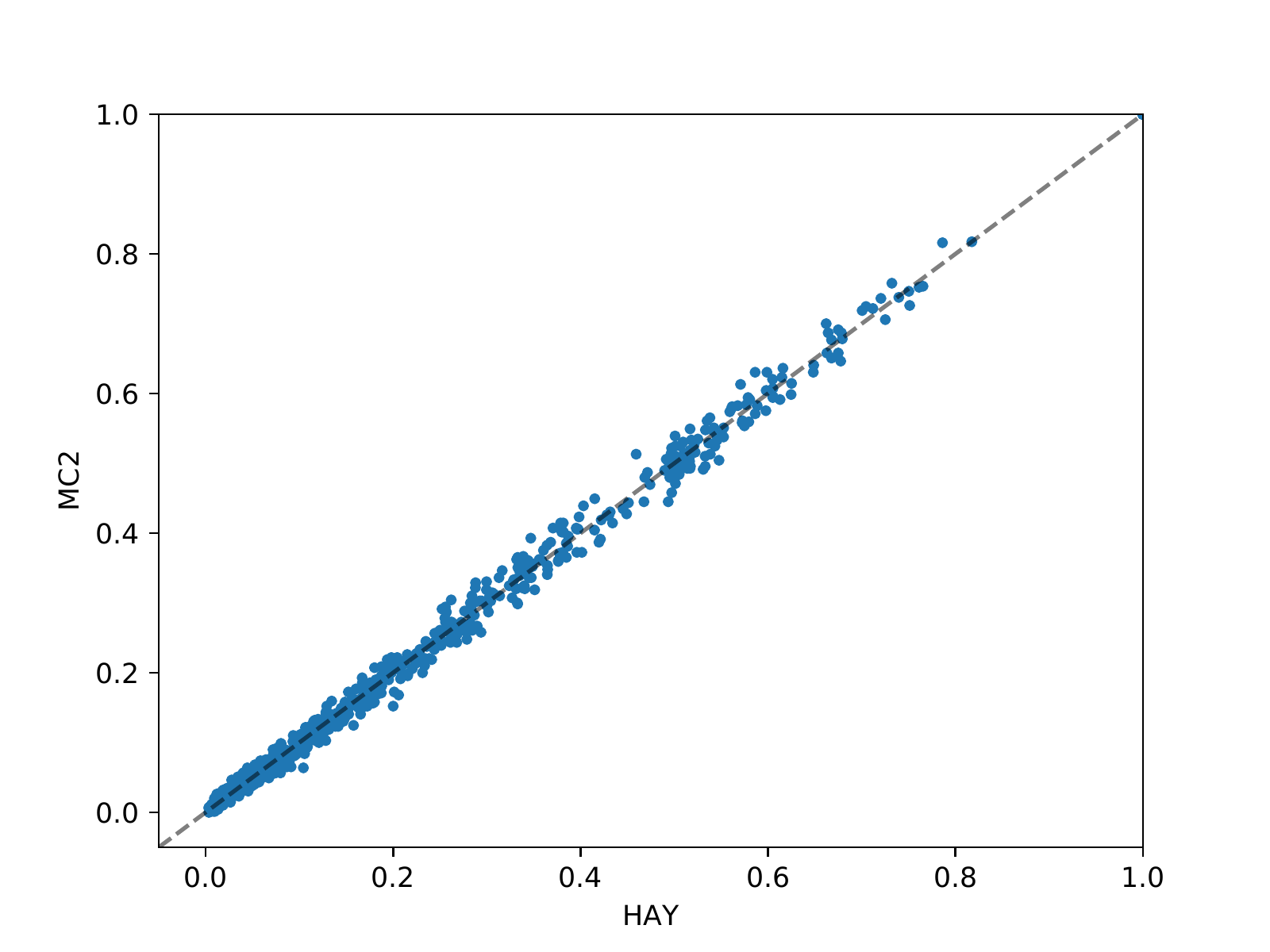} \\
	\end{tabular}
	\caption{Quality of estimated effective resistance}\label{fig:scatter}
\end{figure*}

We implemented the following algorithms:
\begin{itemize}
	\item{\Call{Exact}{}:} This method first applies the QR decomposition to the Laplacian as preprocessing and computes effective resistance according to its definition, i.e., $R_G(s,t):= \chi_{s,t}\mat{L}^\dagger\chi_{s,t}^\top$.
	\item{\Call{HAY}{}~\cite{hayashi2016efficient}:} This method computes effective resistances of all the edges at once by sampling spanning trees and it is still. %
	state-of-the-art for this problem. We fixed the number of sampled spanning trees to 10,000.
	\item{\Call{TP}{}}: Implementation of Algorithm~\ref{alg:tranprob}. We set $\varepsilon=\lambda=0.1$.
	\item{\Call{TP-C}{}}: Implementation of Algorithm~\ref{alg:expander2}. We set $\varepsilon=\lambda=0.1$.
	\item{\Call{MC}{}}: Implementation of Algorithm~\ref{alg:esteff-algo-II}. We set $\varepsilon=\gamma=0.1$.
	\item{\Call{MC2}{}}: Implementation of Algorithm~\ref{alg:esteff-algo-III}. We set $\varepsilon=\gamma=0.1$.
	\item{\Call{ST}{}}: Implementation of Algorithm~\ref{alg:esteff-algo-Ib}. We set $\varepsilon=0.1$.
\end{itemize}
To implement each algorithm, we used the same number of random walks as the one given in the corresponding pseudocode.

\subsection{Running time}\label{sec:runtime}
Figure~\ref{fig:query-time} shows the running time of each method.
For our methods, we plotted the running time for the 1,000 queries in increasing order.
For \Call{Exact}{} and \Call{HAY}{}, we plotted their preprocessing time.
We do not show the running time of \Call{Exact}{} on DBLP and YouTube because it did not terminate in 8 hours.

We can first observe that \Call{MC}{} and \Call{ST}{} on Facebook are as slow as previous (polynomial-time) algorithms, and hence we do not consider those algorithms for other graphs.

We can observe that \Call{TP}{}, \Call{TP-C}{}, and \Call{MC2}{} are much faster than the existing methods.
Note that the running time of \Call{MC2}{} depends on the queried edge $(s,t)$ because it runs until the random walk starting at $s$ reaches $t$.
In contrast, the running time of \Call{TP}{} and \Call{TP-C}{} is almost independent of the queried edge, which is preferable.
A reason that \Call{TP-C}{} is slower than \Call{TP}{} is that we need to compute inner products in \Call{TP-C}{} (Line~6 of Algorithm~\ref{alg:expander2}).

\subsection{Accuracy}
Figure~\ref{fig:accuracy} shows the accuracy of existing and our methods.
For each method, we computed the relative error as $|R - \tilde{R} |/ R$ for each query, where $R$ is the exact effective resistance for Facebook using \Call{Exact}{} and the one estimated by \Call{HAY}{} for DBLP and YouTube, and $\tilde{R}$ is the estimated effective resistance.
Then, we plotted the $1,000$ relative errors after sorting them in increasing order.
Except for ST, the relative error of our methods are within $0.1$ for most of the queries, as expected from the choice $\varepsilon = 0.1$.
Also, the results for Facebook justifies the use of \Call{HAY}{} on DBLP and YouTube as the baseline method. {In Figure \ref{fig:accuracy}(a), the results of \Call{TP}{} and \Call{TP-C}{} are very close such that their lines overlap.} The fact that the lines change concavity twice appears to be a universal phenomenon for probabilistic distributions. %

In Figure~\ref{fig:scatter}, each blue point represents $(R,\tilde{R})$ for a query, where $R$ and $\tilde{R}$ are as we defined in the paragraph above.
\Call{MC2}{} shows the best accuracy on DBLP and YouTube.
Accuracy of \Call{TP-C}{} is comparable to that of \Call{TP}{}.
Recalling that \Call{TP}{} runs faster than \Call{TP-C}{}, we can conclude that \Call{TP}{} is superior to \Call{TP-C}{}.

\section{Conclusion}
In this paper, we developed a number of local algorithms for estimating the pairwise effective resistances, a fundamental metric for measuring the similarity of vertices in a graph. Our algorithms explore only a small portion of the graph while provides a good approximation to $R_G(s,t)$ for any specified $s,t$.  Our algorithms are desirable in applications where the effective resistances of a small number of vertex pairs are needed. Our experiments on benchmark datasets validate the performance of these local algorithms.

\section*{Acknowledgments}
Y.Y. is supported in part by JSPS KAKENHI Grant Number 17H04676, 18H05291, and 20H05965.

\bibliographystyle{ACM-Reference-Format}
\bibliography{subeff}
%\newpage
\appendix
\section{The Chernoff-Hoeffding bound}
We make use of the following Chernoff-Hoeffding bound (see Theorem 1.1 in~\cite{dubhashi2009concentration}).
\begin{theorem}[The Chernoff-Hoeffding bound]\label{thm:chernoff}
	Let $s\geq 1$. Let $X:=\sum_{1\leq i\leq s}X_i$, where $X_i, 1\leq i\leq s$, are independently distributed in $[0,1]$. Then for all $t>0$,
	\[
	\mathbb{P}[|X - \mathbb{E}[X]| > t] \leq e^{-2t^2/s}.
	\]
\end{theorem}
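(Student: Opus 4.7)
My plan is to follow the classical Chernoff--Cram\'er exponential moment method combined with Hoeffding's lemma. First, I would reduce the two-sided bound to a one-sided tail: for any $\lambda>0$, Markov's inequality applied to the nonnegative random variable $e^{\lambda(X-\mathbb{E}[X])}$ gives
\[
\mathbb{P}[X-\mathbb{E}[X]>t]\;\leq\;e^{-\lambda t}\,\mathbb{E}[e^{\lambda(X-\mathbb{E}[X])}],
\]
and independence of the $X_i$ allows the moment generating function to factorize as $\prod_{i=1}^{s}\mathbb{E}[e^{\lambda(X_i-\mathbb{E}[X_i])}]$. This reduces the problem to controlling the MGF of each centered summand individually.

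Next, I would invoke Hoeffding's lemma to bound each factor. Since $X_i\in[0,1]$, the centered variable $Y_i:=X_i-\mathbb{E}[X_i]$ has mean zero and is supported in an interval of length at most $1$, so Hoeffding's lemma gives $\mathbb{E}[e^{\lambda Y_i}]\leq e^{\lambda^2/8}$. The proof of this sub-Gaussian MGF bound proceeds by convexity of $x\mapsto e^{\lambda x}$ on the supporting interval to express the MGF in terms of the endpoints, followed by a Taylor expansion of the logarithm of the resulting function around $\lambda=0$. This step is the main technical content and is the only place where the $[0,1]$ assumption enters, but it is entirely standard. Substituting back yields $\mathbb{P}[X-\mathbb{E}[X]>t]\leq \exp(-\lambda t+\lambda^2 s/8)$.

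Finally, I would optimize the free parameter $\lambda>0$. The exponent $-\lambda t+\lambda^2 s/8$ is minimized at $\lambda^\star=4t/s$, and plugging this choice in gives the one-sided bound $\mathbb{P}[X-\mathbb{E}[X]>t]\leq e^{-2t^2/s}$. Applying the same argument to $-X$ (noting that $-X_i\in[-1,0]$ still has a support of length $1$, so Hoeffding's lemma applies unchanged) handles the lower tail, and a union bound combines them into a two-sided statement. I expect no real obstacle here beyond Hoeffding's lemma itself; the only cosmetic subtlety is that the two-sided bound obtained in this way carries a factor of $2$ on the right, which can be absorbed into the constant or the exponent and does not affect any downstream application of the theorem earlier in the paper.
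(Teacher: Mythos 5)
Your proof is correct and is the standard Chernoff--Cram\'er argument (Markov applied to the exponential moment, factorization by independence, Hoeffding's lemma, optimization of $\lambda$ at $\lambda^\star = 4t/s$). There is nothing to compare it against in the paper itself: the statement is imported verbatim as Theorem 1.1 of Dubhashi and Panconesi and no proof is given, so any complete derivation necessarily goes beyond what the paper does. One point deserves more care than your closing remark gives it. Your argument, via a union bound over the two tails, yields $\mathbb{P}[|X-\mathbb{E}[X]|>t]\leq 2e^{-2t^2/s}$, whereas the statement as printed has no factor of $2$; that factor cannot simply be ``absorbed into the exponent'' without weakening the constant $2$ in $-2t^2/s$, so strictly speaking your proof establishes a slightly weaker inequality than the one displayed. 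In fact the cited source states only the two one-sided bounds, each with constant $1$, and the paper itself applies the theorem with an explicit prefactor of $2$ (see the bound $2\exp(-2\deg(s)^2\varepsilon^2 r^2/(64\ell^2 r))$ in the proof of Theorem~\ref{thm:esteff-expander}), so the missing $2$ is a typo in the statement rather than a defect in your argument; your version is the one actually used downstream.
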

\section{Missing Proofs of Section \ref{sec:algorithms}}\label{sec:app}
We present here the proofs of two theorems in Section \ref{sec:algorithms}. 
\begin{proof}[Proof of Theorem~\ref{thm:esteff-algo-III}]
In Algorithm~\ref{alg:esteff-algo-III}, let $X_i$ be the indicator variable that denotes the $i$-th random walk to be successful (regardless of its length). Then $\mathbb{P}(X_i=1)=R_G(s,t)$. Furthermore, let $X=\sum_1^{M_0}X_i$. Observe that $\mathbb{E}(X)=M_0\cdot R_G(s,t)$.
	
	Next, assume that $R_G(s,t)> \gamma$; let $M_0=\frac{\ln(1/\delta')\cdot 3}{\varepsilon ^2\cdot \gamma}> \frac{\ln(1/\delta')\cdot 3}{\varepsilon ^2\cdot R_G(s,t)}$. Using Chernoff and union bounds we find that
	$\mathbb{P}[|X-\mathbb{E}(X)|>\varepsilon \cdot \mathbb{E}(X)]<2\cdot e^{-\frac{\varepsilon ^2 M_0 R_G(s,t)}{3}}<2\cdot \delta'$ for any $\varepsilon ,\delta' >0$. Thus, we find that with probability at least $1-2\delta'$, $(1-\varepsilon )R_G(s,t)\leq X/M_0\leq (1+\varepsilon )R_G(s,t)$. Now, choosing $\delta'=\frac{\delta}{2}$ yields the desired approximation ratio.

\end{proof}

\begin{proof}[Proof of Theorem \ref{alg:spanningtreelocal}]
	By Lemma~\ref{lemma:local_spantree}, with probability $1-\frac{\delta}{2}$, the $a$ returned in line 1 of Algorithm~\ref{alg:esteff-algo-Ib} satisfies $|a-\frac{\log (T(G'))}{n-1}|\leq \frac{\varepsilon}{2}$. Similarly, with the same probability, $|b-\frac{\log (T(G))}{n}|\leq \frac{\varepsilon}{2}$. By the union bound, this implies that with probability $1-\delta$, $a(n-1)-b n\leq \frac{\varepsilon}{2} (2n-1)-\log (T(G))+\log T((G'))$. Thus, we find that $\frac{e^{a(n-1)}}{e^{bn}}\leq e^{\varepsilon n}\cdot \frac{T(G')}{T(G)}$. Similarly, it holds that $\frac{e^{a(n-1)}}{e^{bn}}\geq e^{-\varepsilon n}\cdot \frac{T(G')}{T(G)}$.
	
	Let $X = \frac{e^{a(n-1)}}{e^{bn}}$. Then by Lemma~\ref{lemma:eff_spantree}, $e^{-\varepsilon n} R_G(s,t)\leq X\leq e^{\varepsilon n} R_G(s,t)$. This yields the desired approximation ratio.
\end{proof}
\end{document}